\documentclass[11pt]{article}


\input{setup.sty}

 
\title{Holographic Entropy Inequalities and Multipartite Entanglement}

\author[a]{Sergio Hern\'andez-Cuenca,}
\emailAdd{sergiohc@mit.edu}
 
\author[b]{Veronika E. Hubeny,}
\emailAdd{veronika@physics.ucdavis.edu}

\author[b]{and Hewei Frederic Jia}
\emailAdd{fjia@ucdavis.edu}
 
\affiliation[a]{Center for Theoretical Physics, Massachusetts Institute of Technology, Cambridge, MA 02139, USA}
\affiliation[b]{Center for Quantum Mathematics and Physics (QMAP)\\ 
Department of Physics \& Astronomy, University of California, Davis, CA 95616, USA}

 
\abstract{
We study holographic entropy inequalities and their structural properties by making use of a judicious grouping of terms into certain multipartite information quantities. This allows us to recast cumbersome entropic expressions into much simpler ones which share interestingly rigid structures. By performing a systematic search over some of these structures, we are able to discover more than 1800 novel entropy inequalities for six parties, thereby demonstrating that these recastings provide a fruitful generating technique for uncovering new holographic entropy inequalities. In attempting to interpret the corresponding sign-definite quantities as correlation measures, we also obtain a no-go result: the superbalance property of holographic entropy inequalities turns out to preclude them from being monotonic under partial tracing. 
In the process, we also comment on the geometrical significance of multipartite information quantities and present various structural relations amongst them.
}

 
\begin{document}
 
 
\maketitle

\section{Introduction}
\label{sec:intro}

Holographic entropy inequalities (HEIs) restrict the entanglement structure of geometric states of any holographic CFT, beyond what would be allowed for a generic quantum state.  As such, HEIs provide some indirect insight into the emergence of spacetime and dynamical gravity, as well as the underlying workings of the holographic dictionary.  Although presently we do not have the full set of HEIs for $\nN\geq6$ parties, in this work we have developed powerful heuristics which have allowed us to collect a set of $1877$ HEI orbits for $\nN=6$.
However, when written out in terms of subsystem entropies, these quantities appear rather lengthy and uninformative;\footnote{ \, 
Other renditions, in the so-called I-basis (based on multipartite information quantities $I_n$) and K-basis (based on even-party perfect tensors) have been explored in \cite{He:2019ttu}.  Although the I and K basis representations are more compact and have a number of advantages, they only reduce the number of terms appearing in the expressions by roughly a factor of two.
} this is the case already for $\nN=5$, cf.\ \cref{tab:HEI_N5}.  Just by looking at these expressions, it is completely unclear what message they convey or what is the underlying principle they emerge from.

Recently the viewpoint that part of the complexity stems from fixing a definite value of $\nN$ was suggested by \cite{Hernandez-Cuenca:2022pst}, which showed that (subject to certain assumptions)  these HEIs can be extracted from the solution to a much simpler problem (namely a part of the ``holographic marginal independence problem" \cite{Hernandez-Cuenca:2019jpv} of characterizing all the holographically admissible sets of simultaneously decorrelated subsystems) for a more refined partition $\nN' \ge \nN$.  
However, while conceptually appealing, the underlying construction utilized extreme rays of entropy cones,  whereas the question of interest here concerns the facets of such cones.
Since the passage between the representation of a polyhedral cone in terms of its extreme rays and the representation in terms of its facets is computationally hard, the insight of \cite{Hernandez-Cuenca:2022pst} does not directly address the initial question of what the individual $\nN$-party HEIs mean physically for a given fixed $\nN$.
Correspondingly, it would be desirable to obtain a re-packaging which elucidates their physical meaning in a more manifest way.  This motivated the present exploration, in which we do present a more compact form, but more importantly a no-go theorem for a certain natural interpretation.

The program of elucidating HEIs has proceeded primarily by examining the structural properties of the holographic entropy cone (HEC), first defined in \cite{Bao:2015bfa}
and subsequently explored from several different angles \cite{Marolf:2017shp,Rota:2017ubr,Cui:2018dyq,Hubeny:2018trv,Bao:2018wwd,Hubeny:2018ijt,Cuenca:2019uzx,Czech:2019lps,Hernandez-Cuenca:2019jpv,He:2020xuo,Bao:2020zgx,Bao:2020mqq,Akers:2021lms,Avis:2021xnz,Bao:2021gzu,Czech:2022fzb,He:2022bmi,He:2023cco,He:2023rox}.
The HEC can be naturally described as a convex hull of the most extremal holographic entropy vectors, namely the \textit{extreme rays}, or equivalently in terms of the tightest HEIs that bound it,  namely the \emph{facets}. Inequalities which can be obtained as conical combinations of other HEIs are redundant and uninteresting. Facet-defining HEIs are a special set of non-redundant ones. To distinguish them one should consider the dimensionality of the space of saturating entropy vectors. More explicitly, any linear entropy inequality $Q\ge0$ specifies a half-space bounded by the hyperplane $Q=0$. A HEI defines a facet of the HEC if and only if there exists a codimension-$1$ set of linearly independent holographic entropy vectors for which $Q=0$. 
The facets of the HEC thus single out a special set of such sign-definite objects in holography which is important to understand further. 

From here on, we reserve the term HEI to indicate this facet-defining set of holographic inequalities, and we will refer to the associated sign-definite, non-redundant linear combinations of entropies as
\textit{information quantities}. Entropy vectors in the codimension-$1$ interior of any one facet will give strictly positive values for any other such information quantity. Because these information quantities are sign-definite and can all vanish independently of the others, it is tempting to view them as characterizing some sort of multipartite measures of correlations in holographic systems.

This expectation is of course borne out for a small but crucial subclass of HEIs, namely the  mutual information  quantities $\mxI{X:Y}$ measuring the total amount of correlation between disjoint subsystems $\x{X}$ and $\x{Y}$. The non-negativity of mutual information is the statement of subadditivity (SA),
\begin{equation}\label{eq:SA}
\mxI{X:Y} \equiv \Sx{X} +  \Sx{Y} - \Sx{XY} \ge 0 \ ,
\end{equation}	
where $\Sx{X}$ denotes the entropy of $\x{X}$ (and similarly for $\x{Y}$ and $\x{XY} := \x{X} \cup \x{Y}$). Structurally, the vanishing of the mutual information is indicative of the marginal independence property, that the common density matrix $\rhox{XY}$ factorizes between the two subsystems, i.e.\ the marginals $\rhox{X}$ and $\rhox{Y}$.  The fact that the mutual information is a correlation measure not only justifies its non-negativity, but also the fact that this sign-definiteness is universal for any physical state.  

Another universal statement, following from the mutual information being a correlation measure, is its monotonicity under inclusion.  This property is known as strong subadditivity (SSA), and can be expressed as the non-negativity of conditional mutual information,
\begin{equation}\label{eq:SSA}
\cmxI{X:Z}{Y} \equiv \mxI{X:YZ} - \mxI{X:Y} \ge 0 \ .
\end{equation}	
Structurally, SSA saturation is associated with a quantum Markov chain \cite{Hayden:2004}, which plays a crucial role in quantum error correction. 
However, holographically, SSA is redundant, being superseded by the monogamy of mutual information (MMI), which can be expressed as the negativity of tripartite information, 
$\txI{X:Y:Z}$  
\begin{equation}\label{eq:MMI}
-\txI{X:Y:Z} \equiv \mxI{X:YZ} - \mxI{X:Y} - \mxI{X:Z} 
\ge 0 \ .
\end{equation}	
While MMI can be violated by certain quantum states, it is satisfied by all geometric states in holography \cite{Hayden:2011ag}, and thereby offers an intriguing insight about the entanglement structure of such states.  

These simple classes of HEC facets prompt us to seek a similarly evocative explanation of the other facets,  ideally as a correlation measure or perhaps as some conditional quantity to which one could give an operational interpretation.  However, this quest is hindered by the structural complexity of the HEC growing rather quickly with the number of parties $\nN$.  
The full HEC is symmetric under any relabeling of the parties including the purifier (referred to as permutations if they do not involve the purifier, and purifications if they do), so the set of facets can be organized into symmetry orbits of a given representative instance of these HEIs.  In absence of any symmetry structure for a given HEI, the number of instances of a given orbit grows factorially with $\nN$, and the number of facet orbits also grows rapidly: for $\nN=2,3,4,5$ there are $1,2,3,8$ orbits respectively, while for $\nN=6$ our search has yielded as many as 1877 facet orbits (including $11$ distinct lifts from $\nN\le 5$), which gives a likely-modest lower bound on the actual number.  Although a systematic constructive procedure for determining the full HEC by an iterative algorithm has been formulated in \cite{Avis:2021xnz}, the more fruitful method that has hitherto yielded all of these new HEIs has involved positing a structurally compact form for generating good candidate inequalities that we present in this paper.
For each such candidate one then uses the contraction map technique developed in \cite{Bao:2015bfa} to check its validity, and then generates a suitable set of holographic entropy vectors to check if it is a facet.

To guess an obliging form for each HEI which might bring us closer to interpreting its information-theoretic meaning, we draw inspiration from the simple HEIs, namely the non-negativity of $\mxI{X:Y}$, $\cmxI{X:Y}{Z}$, and $-\txI{X:Y:Z}$. In some sense, the next simplest form would be the conditional tripartite information $-\ctxI{X:Y:Z}{W}$.  Although as shown below, this quantity by itself is not sign-definite, it turns out that many HEIs do admit a rather compact form when decomposed into a combination of (conditional) tripartite information quantities for various composite subsystems, cf.\ \cref{tab:HEI_N5} for $\nN=5$ and \cref{tab:HEI_N6} for $\nN=6$.
(The latter is just a small subset of known $\nN=6$ HEIs; the full set of hitherto-obtained ones is explained in \cref{asec:n6heis} and presented in the ancillary files.)
In one of the simplest cases, this recasting allows us to package a HEI into 2 terms as opposed to 13 terms when written in the S-basis.
Besides providing a powerful heuristic for obtaining valid inequalities, we expect this efficient rendition of HEIs to bring us one step closer to a structural and qualitative understanding of their physical meaning.

To make this expectation more precise, we attempt to interpret the non-negative information quantities associated to HEIs as correlation measures, and explore whether they are monotonically non-increasing under partial-tracing (equivalently, non-decreasing under inclusion), as is the case with the mutual information. The main result of our paper is a proof, in complete generality, that this is not possible.   In particular, we prove that \emph{none of the HEI information quantities besides the mutual information is monotonic under inclusion}.  In other words, we can always find some holographic configuration for which shrinking some subsystem causes the information quantity to increase.

Our proof utilizes an important property of HEIs called  \emph{superbalance}.  This was defined in \cite{Hubeny:2018ijt}, and subsequently in \cite{He:2020xuo} it was shown that all non-SA HEIs must be superbalanced.
This conveniently restricts the form that any HEI can take, so that when we consider the difference between the original information quantity and the corresponding quantity obtained by tracing out part of some subsystem, this difference can be shown to take either sign, even when restricted to the holographic context.  More concretely, showing that it is not holographically sign-definite proceeds by expressing the difference in terms of a sum over conditional multipartite informations and by judicious sequential choice of states demonstrating that we can collapse this sum onto just a single term which itself can take either sign.

The outline of this paper is as follows.
In \cref{sec:multiparty} we review the key properties of multipartite entanglement.  We start in \cref{ss:notation}  by setting up our notation and terminology.  Section \ref{ss:In} then defines the multipartite information $I_n$ and presents a number of lemmas summarizing the properties which we will utilize below, while the short \cref{ss:CIn} analogously presents conditional multipartite information.  All of these results can be derived straightforwardly from the basic definitions, but some of them simplify by using a certain structural property of the multipartite informations indicated in \cref{asec:Infactornotn}.  In \cref{sec:HIQ} we discuss the superbalance property of HEIs and its consequences, which paves the way to proving our main result in \cref{sec:correlations}, namely that superbalance precludes monotonicity under inclusion.  Having thus established that none of the higher HEIs act as correlation measures,  we summarize our explorations of recasting the HEIs into a more compact form in \cref{ss:HEIrecasting}, 
which also includes the tables of the non-uplift $\nN=5$ HEIs and selected representative subset of $\nN=6$ HEIs. 
This is accompanied by \cref{asec:heis} which explains how to obtain the compact forms starting from the standard representation.
We end with a brief summary and discussion of future directions in \cref{sec:discuss}.
Finally, to facilitate further explorations, in \cref{asec:n6heis} we briefly detail the process through which we have been able to generate more than $1800$ new HEIs for six parties, and summarize their forms in \cref{fig:HEIimfreq}.
Also, for comparison with v1, in \cref{tab:new_HEIs} we present 25 of the new HEIs (i.e.\ whose tripartite form has at least 5 terms), selected by typicality of their form.
This is supplemented with ancillary files, accessible on the arXiv page for this paper, where we list all the newly found inequalities.
A more complete data repository, including not just the facets but also the extreme rays known so far, is available in \cite{hecdata}.

\section{Multipartite Entanglement}
\label{sec:multiparty}

Hitherto, multipartite entanglement has remained a fascinating but poorly understood phenomenon at a quantitative level. The main obstacle resides already in the lack of meaningful measures of such correlations, or even a basic understanding of axiomatic properties these should reasonably satisfy. One of our goals in this paper is to explore the possibility of interpreting the information quantities associated to HEIs as multipartite correlation measures.
To do so, in this section we begin by developing key aspects of the multipartite information and conditional multipartite information,
which pave the way for a better reading of HEIs and their properties.  
To make the presentation notationally concise wherever possible, we will adopt several shorthand representations of the quantities of interest.

\subsection{Notation and setup}
\label{ss:notation}

The HEC is a geometrical construct in the entropy space $\mathbb{R}^\nD$ with axes labeled by subsystem entropies.  For $\N$-party configuration, there are $\nD=2^{\N}-1$ independent subsystems, consisting of all non-trivial collections of the individual parties.

We will label monochromatic subsystems (which we'll also refer to as \emph{singletons}) by either letters $\x{A,B,C,} \dots$, or when we wish to enumerate them, a letter with subscript $\x{A}_1,\x{A}_2,\x{A}_3$, etc..
As a notational shorthand, when working with $\N$-party systems, we will denote the set of all parties by $[\N]\equiv\{1,2,\dots,\N\}$.
A polychromatic system consisting of multiple monochromatic ones can be written out explicitly as e.g.\ $\x{AB}$.
When referring to a general nonempty subsystem without needing to specify its singleton composition, our notation will be
either a letter from the end of the alphabet such as $\x{X}, \x{Y}, \x{Z}$ already employed above, or (for the sake of adhering to previously-established notation)
$\pI$, $\pJ$, \dots, implicitly assumed to satisfy $\varnothing\ne\pI\subseteq[\N]$. We can then use this index as a subscript on entropy, $\S_\pI$, to denote the entropy of the joint polychromatic subsystem. Sometimes we will also wish to enumerate multiple polychromatic subsystems, in which case we will use a letter from the end of the alphabet with a subscript, e.g.\ $\x{X}_1,\x{X}_2,\x{X}_3$, etc..

At times it will also be useful to refer to the complementary subsystem which purifies the state for $\N$ parties at hand. Labelling the purifier of an $\N$-party system by $\N+1$, one may equivalently consider pure states on $\N+1$ parties. Subsystems which may involve the purifier will be decorated with an underline, $\pII$, $\pJJ, \ldots $, and throughout assumed to obey $\varnothing\ne\pII\subset[\N+1]$. An important property of the von Neumann entropy associated to this generalized notation is the fact that $\S_{\pII}=\S_{[\N+1]\smallsetminus\pII}$ for all $\pII$, namely, that by purity of the full state, the entropy of a subsystem and that of its complement are identical.

The permutation symmetries that the HEC enjoys will be relevant to the study of general information quantities as well. For $\N$ parties, the HEC is obviously symmetric under permutations of the party labels in $[\N]$, which are captured by the symmetric group $Sym_{\N}$. These are referred to as the permutation symmetry of the HEC. In addition, by the purity property mentioned above, the full symmetry group of the HEC gets enhanced to $Sym_{\N+1}$ for permutations of $[\N+1]$ under the identification $\S_{\pII}=\S_{[\N+1]\smallsetminus \pII}$ for all complementary subsystems. The set of permutations involving $\N+1$ are referred to as the purification symmetry of the HEC. Canonically, we will always use the purification symmetry to label entropies by the $2^{\N}-1$ subsets $\varnothing\ne \pI \subseteq[\N]$ which do not involve $\N+1$.

When referring to multipartite information with $n$ arguments, we write this abstractly as $I_n$; to indicate the explicit arguments, it is conventional to write them out explicitly as in 
$I_n(\x{X}_1:\x{X}_2:\ldots:\x{X}_n)$ where the $\x{X}_i$'s are mutually disjoint subsystems which can be monochromatic or polychromatic.  When considering a collection of just monochromatic arguments, it will be notationally convenient to compress them into a single (decorated) polychromatic subsystem, which we will denote by  an underdot like $\aset{\pI}$ or a dotted underline as in ${\x{\aset{AB}}}$.  
A further special case is when \emph{all} the $I_n$ arguments are monochromatic; since this is an important construct, we will use a dedicated notation and terminology for this case: we will call such an $I_n$ a \emph{singleton multipartite information} and emphasize this feature by a special font $\sI_n$.  Moreover if we want to specify the actual arguments, we can now use the combined polychromatic system as an index, in which case we drop the underdots and write simply $\sI_\pI$.  (Note that in the latter case we can leave the number of arguments $n = \abs{\pI}$ implicit.)   So for instance we can write interchangeably
\begin{equation}\label{eq:Innotations}
	 \sI_4(\x{A:B:C:D}) 
	 = \xI{ABCD}  
	 = \sI_4(\x{\aset{AB}:C:D}) = \sI_4(\x{A:\aset{BCD}})
\end{equation}	
Although redundant, allowing ourselves this notational flexibility will enable us to present the arguments and HEI tables more cleanly.

\subsection{Multipartite information}
\label{ss:In}

We now review the various salient properties of the multipartite information $I_n$.  We start with the definition and structural aspects of these information quantities, and then proceed to discuss the values they can evaluate to for given classes of configurations.

\paragraph{$I_n$ in terms of entropies:}
Using the notation established in \cref{ss:notation}, we can now write the multipartite information in terms of entanglement entropies quite compactly.  Let us first specialize to the case where all the arguments are monochromatic.  If the arguments comprise the polychromatic system $\pI$, we can write the singleton multipartite information as
\begin{equation}\label{eq:IfromS}
	\sI_\pI = \sum_{\pK \subseteq \pI} (-1)^{\abs{\pK}+1} \, \S_\pK \ ,
\end{equation}	
where the usual sign convention is fixed so as to keep the singleton coefficients positive.
For example, for the tripartite information this would give 
\begin{equation} 
	\xI{ABC} = \xS{A} + \xS{B} + \xS{C} - \xS{AB} - \xS{AC} - \xS{BC} + \xS{ABC}
\end{equation}	
and so forth.  
Once written out in the form \cref{eq:IfromS}, we can straightforwardly replace the monochromatic arguments by polychromatic ones, e.g.\ 
\begin{equation}
	I_3(\x{A:B:CD}) = \xS{A} + \xS{B} + \xS{CD} - \xS{AB} - \xS{ACD} - \xS{BCD} + \xS{ABCD}
\end{equation}	
(Note the change of font on the $I_3$ indicates that this is no longer a singleton multipartite information.)

\paragraph{$\sI_n$ basis:}
Let us consider the set of all singleton multipartite informations $\sI_\pI$ (i.e.\ those with only monochromatic arguments).  There are precisely $\nD$ of them, the same as the number of independent entropies, and hence the dimensionality of the entropy space.
Indeed, as argued in \cite{Hubeny:2018ijt,He:2019ttu}, the set of all $\sI_\pI$'s form a basis for the entropy space, referred to as the ``I-basis'', to distinguish it from the ``S-basis'' representation in terms of the entropies.  This means that we can express any information quantity in terms of these multipartite informations.  For example, the entropies themselves are obtained very simply by inverting \cref{eq:IfromS}, which is in fact captured by an involutory matrix:
\begin{equation}\label{eq:SfromI}
	\S_\pI = \sum_{\pK \subseteq \pI} (-1)^{\abs{\pK}+1} \, \sI_\pK	 \ .
\end{equation}	

\paragraph{$I_n$ in terms of other $I_m$'s:}
It will be useful to express the multipartite information with polychromatic arguments in the I-basis as well, and conversely, to express $\sI_n$ in terms of $I_m$'s with smaller $m$ but polychromatic arguments.  To obtain such relations, we will make use of the underlying symmetries of multipartite information.
From \cref{eq:IfromS} we can easily see that any $\sI_\pI$ is symmetric under permuting its arguments.  In fact it has a much larger symmetry since all the coefficients are $\pm 1$ depending on the parity of the size of the term.  Indeed, this has an iterative structure which is manifest in a ``notationally-factorized'' form presented in \cref{asec:Infactornotn}.
This notation enables us to obtain the following relations straightforwardly.

First of all, we can build up higher multipartite informations, say $I_{n+1}$, from a combination of lower ones, say $I_n$'s, but at the cost of invoking polychromatic arguments.  In particular, using the shorthand notation indicated above in \cref{eq:Innotations} with $\aset{\pI}$ comprising $n-1$ singletons distinct from $\x{A}$ or $\x{B}$, we have, for any $n$,
\begin{equation}\label{eq:Inadditivitysplit}
	\sI_{n+1}(\x{A}:\x{B}:\aset{\pI}) 
	= \sI_n(\x{A}:\aset{\pI})+  \sI_n(\x{B}:\aset{\pI}) - I_n(\x{AB}:\aset{\pI})
\end{equation}	
which by permutation symmetry on the arguments of $I_n$ applies for any positions of $\x{A}$ and $\x{B}$.  We can furthermore ``fine-grain'' any singleton(s) into polychromatic subsystem(s), to obtain any $I_{n+1}$ in terms of $I_n$'s.
We can then iterate \cref{eq:Inadditivitysplit} further, to express any $I_n$ in terms of a collection of $I_m$'s for any smaller $m$, so in particular there is a multitude of ways we can express any $I_m$ purely in terms of just mutual informations $I_2$ between various polychromatic subsystems.
The special case of $n=2$ was already utilized in \cref{eq:MMI}.  Conversely, we can iterate \cref{eq:Inadditivitysplit} to express multipartite information with polychromatic arguments in the I-basis.
We will make explicit use of these relations to obtain a compact recasting of the HEIs in \cref{asec:heis}.

\paragraph{$I_n$ values:}

Having defined the multipartite information $I_n$ in the preceding subsection, 
we now consider the actual values it can take for a given configuration.  
Recall that by SA (\cref{eq:SA}), the mutual information $I_2$ for any two subsystems is always non-negative and vanishes when the subsystems are uncorrelated, and by MMI (\cref{eq:MMI}) the tripartite information $I_3$ between any three subsystems is non-positive and likewise vanishes when the subsystems are uncorrelated.
In this subsection we will show that while the sign-definiteness does not extend to any higher $n$ (for $n=4,5$ this was already observed in \cite{Hayden:2011ag} and noted for general $n$ in \cite{Hubeny:2018ijt}), the feature of vanishing on decorrelated states does generalize to all $I_n$'s.  

We start with the latter part which is more immediate.  Of course if all the $n$ subsystems invoked by $I_n$ were pairwise decorrelated from each other, then we could rewrite $I_n$ in terms of just $I_2$'s and apply the above SA saturation result to each term.  We can however drastically relax the requirement to obtain a much stronger statement:  for the $n$-partite information $I_n$ to vanish, it suffices for any subsystem comprising a subcollection of its arguments to be decorrelated from the remainder.

\begin{nlemma}
\label{lem:Ivashprod}
\emph{$I_n(\x{X}_1:\ldots:\x{X}_n)$} vanishes on any product state of the form \emph{
$$\rho_{_{\x{X}_1\dots \x{X}_n}} = \pi_{_{\mathcal{J}}} \otimes \sigma_{_{\mathcal{K}}},$$} with $\mathcal{J}$ and $\mathcal{K}$ composed of complementary subsets of the \emph{$\x{X}_i$} subsystems.
\end{nlemma}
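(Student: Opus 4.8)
The plan is to reduce the claim to a short combinatorial identity whose only physical input is additivity of the von Neumann entropy on product states. Write $J, K \subseteq [n]$ for the complementary index sets so that $\mathcal{J} = \{\x{X}_i : i \in J\}$ and $\mathcal{K} = \{\x{X}_i : i \in K\}$, with $J \sqcup K = [n]$ and both nonempty (otherwise the factorization is vacuous).

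First I would expand $I_n$ in the S-basis. Generalizing \cref{eq:IfromS} to polychromatic arguments via the fine-graining noted directly beneath it, one has
\begin{equation*}
	I_n(\x{X}_1:\cdots:\x{X}_n) = \sum_{\varnothing \ne T \subseteq [n]} (-1)^{\abs{T}+1}\, \S_{\x{X}_T}, \qquad \x{X}_T := \bigcup_{i\in T}\x{X}_i .
\end{equation*}
Since $\S_{\x{X}_\varnothing}=0$, the $T=\varnothing$ term is harmless and I can rewrite this as the full alternating sum $I_n = -\sum_{T\subseteq[n]}(-1)^{\abs{T}}\,\S_{\x{X}_T}$.

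The single physical ingredient is that $\rho_{\x{X}_1\cdots\x{X}_n} = \pi_{\mathcal{J}}\otimes\sigma_{\mathcal{K}}$ forces every reduced state to factorize, $\rho_{\x{X}_T} = \rho_{\x{X}_{T\cap J}}\otimes\rho_{\x{X}_{T\cap K}}$, whence additivity of entropy gives the separable form $\S_{\x{X}_T} = \S_{\x{X}_{T\cap J}} + \S_{\x{X}_{T\cap K}}$. Substituting this and writing each $T$ uniquely as $T = A \sqcup B$ with $A\subseteq J$ and $B\subseteq K$ (so $\abs{T}=\abs{A}+\abs{B}$), the alternating sum factorizes. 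The piece carrying $\S_{\x{X}_A}$ is independent of $B$, so the free sum over $B\subseteq K$ contributes a factor $\sum_{B\subseteq K}(-1)^{\abs{B}} = (1-1)^{\abs{K}} = 0$, which vanishes because $K\ne\varnothing$; symmetrically the $\S_{\x{X}_B}$ piece carries the factor $\sum_{A\subseteq J}(-1)^{\abs{A}}=0$ since $J\ne\varnothing$. Both pieces vanish and $I_n = 0$, as claimed.

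I do not expect a genuine obstacle here: the content is entirely in entropy additivity on a product state, and the rest is bookkeeping, the only care being the empty-set conventions and the observation that the collapse truly requires both $\mathcal{J}$ and $\mathcal{K}$ to be nonempty. As an alternative that stays closer to the paper's machinery, I could instead induct on $n$ using \cref{eq:Inadditivitysplit}: the base case $I_2 = 0$ is SA saturation, and for $n\ge3$ one of the two groups has at least two members, so splitting two of its arguments via \cref{eq:Inadditivitysplit} yields three $I_{n-1}$'s, each of which still factorizes across the same $\mathcal{J}$/$\mathcal{K}$ partition (with both sides nonempty) and hence vanishes by the inductive hypothesis. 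I find the direct inclusion--exclusion computation cleaner and more transparent, so I would present that as the main argument.
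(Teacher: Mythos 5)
Your proposal is correct and is essentially the paper's own argument: expand $I_n$ in the S-basis, use additivity of entropy on the product state, and observe that the coefficient multiplying each surviving entropy term is the alternating sum $\sum_{B\subseteq K}(-1)^{\abs{B}}=(1-1)^{\abs{K}}=0$ (the paper writes this same sum as $\alpha(\mathcal{K})=\sum_{k=0}^{\abs{\mathcal{K}}}(-1)^k\binom{\abs{\mathcal{K}}}{k}=0$, grouping the result as $\alpha(\mathcal{K})\,I_{\mathcal{J}}|_\pi+\alpha(\mathcal{J})\,I_{\mathcal{K}}|_\sigma$). Your bookkeeping via $T=A\sqcup B$ and your care about both blocks being nonempty match the paper's reasoning exactly, so there is nothing further to add.
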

\begin{proof}
For this proof, polychromatic subsystems $\x{X}_i$ can be treated as irreducible monochromatic subsystems $\x{A}_i$, since the sub-parties of each $\x{X}_i$ will play no role.
Therefore, consider $I_{[n]}$ evaluated on some state $\rho_{[n]}$. A product state $\rho_{[n]}$ admits a bipartition, i.e., $[n] = \mathcal{J} \cup \mathcal{K}$ with $ \mathcal{J} \cap \mathcal{K} = \varnothing$, such that 
\begin{equation}
    \rho_{[n]} = \pi_{_{\mathcal{J}}} \otimes \sigma_{_{\mathcal{K}}} \, .
\end{equation}
Entropies evaluated on $\rho$ then satisfy
\begin{equation}
\label{eq:SA-saturation}
    \S(\pJ \cup \pK) = \S(\pJ) + \S(\pK), \qquad \forall \pJ \subseteq \mathcal{J}, \pK \subseteq \mathcal{K}.
\end{equation}
When substituted into \cref{eq:IfromS}, it is easy to verify that $I_{[n]}$ reduces to
\begin{equation}
    I_{[n]}\big|_{\rho} = \alpha(\mathcal{K}) \, I_{\mathcal{J}}\big|_\pi + \alpha(\mathcal{J}) \, I_{\mathcal{K}}\big|_\sigma
\end{equation}
with coefficient $\alpha(\cdot)$ given by
\begin{equation}
    \alpha(\pI) =  \sum^{|\pI|}_{k=0} (-1)^k \binom{|\pI|}{k} = 0 \, . 
\end{equation}
This shows that the multipartite information vanishes on any system whose density matrix can be written as a product between the density matrices of two systems.
\end{proof}

Let us now consider the case when the composite argument in $I_n$ comprises a density matrix which does not factorize (or factorizes in a way which does not `align' with the $n$-partition considered above).   Does the sign of $I_n$ get fixed for any $n>3$?  To see that the answer is \emph{no}, it suffices to produce two configurations, one for which $I_n$ is positive and another for which it is negative.
While we could certainly do this with explicit holographic geometries, it is even easier to produce such examples using the toolkit of holographic graph models.\footnote{ \, As first done in \cite{Bao:2015bfa}, the problem of computing entropies using the Ryu-Takayanagi (RT) \cite{Ryu:2006bv} formula for arbitrary subsystems in a given holographic configuration can be concisely recast into a graph-theoretic language. Deploying the RT surfaces for all relevant subsystems altogether, they partition the bulk into a set of codimension-$0$ regions. One assigns a vertex to every such region, and connects them by an edge whenever they correspond to adjacent bulk regions. Each edge is given a weight corresponding to the area of the associated piece of RT surface they go through. Those vertices on regions reaching the conformal boundary are understood as capturing the boundary subsystems. This way, one ends up with a desiccation of the bulk geometry, where all results obtained from the RT prescription of minimizing areas can be efficiently reproduced by the prescription of minimizing cut weights on the resulting holographic graph model.}

\begin{figure}[h!]  
  \begin{center} 
  \begin{tikzpicture}
	\tikzmath{
		\v =0.1;  
		\e=1.8;  
	}
	\coordinate (bv) at (0,0);  
	\coordinate (Av) at (90:\e);  
	\coordinate (Bv) at (30:\e);  
	\coordinate (Cv) at (-30:\e);  
	\coordinate (Dv) at (210:\e);  
	\coordinate (Ev) at (150:\e);  
	\draw[edgestyle] (Av) -- node[blue,left,pos=0.4]{$1$} (bv);
	\draw[edgestyle] (Bv) -- node[blue,above,pos=0.4]{$1$} (bv);
	\draw[edgestyle] (Cv) -- node[blue,above,pos=0.4]{$1$} (bv);
	\draw[edgestyle] (Dv) -- node[blue,below=-0.2,pos=0.55]{$1$} (bv);
	\draw[violet,thick] (Ev) -- node[violet,below,pos=0.4]{$w_0$} (bv);
	\filldraw [color=gray] (bv) circle (\v) ;
	\filldraw [color=black] (Av) circle (\v) node[black,above right=0pt]{$\x{A}_1$} ;
	\filldraw [color=black] (Bv) circle (\v) node[black,right=2pt]{$\x{A}_2$} ;
	\filldraw [color=black] (Cv) circle (\v) node[black,right=2pt]{$\x{A}_3$} ;
	\filldraw [color=black] (Dv) circle (\v) node[black,left=2pt]{$\x{A}_n$} ;
	\filldraw [color=black] (Ev) circle (\v) node[black,left=2pt]{$\x{A}_{n+1}$} ;
    \draw[loosely dotted,very thick] (-65:\e) arc (-65:-115:\e]);
\end{tikzpicture}  
    \caption{Star graph with $n$ edge weights fixed to $1$ and the remaining edge weight to some $w_0>0$.
    Depending on the value of $w_0$, $\sI_{[n]}$ evaluated on this graph can take either sign.}
      \label{fig:star}
  \end{center}
\end{figure}  

\begin{nlemma}
\label{lem:Inosign}
    $I_n$ with $n\geq4$ is not holographically sign-definite.
\end{nlemma}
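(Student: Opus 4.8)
The plan is to establish the failure of sign-definiteness constructively, by exhibiting a single one-parameter family of holographic graphs on which $I_{[n]}$ realizes both signs as the parameter is varied. I would use precisely the star graph of \cref{fig:star}: a central bulk vertex joined by unit-weight edges to the $n$ boundary vertices $\x{A}_1,\dots,\x{A}_n$ and by an edge of weight $w_0>0$ to the purifier $\x{A}_{n+1}$. For any $\pK\subseteq[n]$, the minimal cut isolating $\pK$ has only two candidates---sever the edges from the center to $\pK$, or sever all the remaining edges (those to $[n]\smallsetminus\pK$ together with the purifier edge)---so that
\begin{equation*}
\S_\pK=\min\!\big(\abs{\pK},\,w_0+n-\abs{\pK}\big),
\end{equation*}
which depends on $\pK$ only through $k:=\abs{\pK}$. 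Substituting into \cref{eq:IfromS} and collecting subsets by size collapses the information quantity to the single-variable alternating sum
\begin{equation*}
I_{[n]}(w_0)=\sum_{k=0}^{n}(-1)^{k+1}\binom{n}{k}\,\min\!\big(k,\,w_0+n-k\big).
\end{equation*}

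Next I would analyze this as a function of $w_0$. It is continuous and piecewise linear, with breakpoints where some $\min$ switches branch, i.e.\ at $w_0=2k-n$. On each linear piece the slope is the partial alternating binomial sum $\sum_{k>(w_0+n)/2}(-1)^{k+1}\binom{n}{k}$, which---via the standard identity $\sum_{k=0}^{m}(-1)^k\binom{n}{k}=(-1)^m\binom{n-1}{m}$---evaluates to $\pm\binom{n-1}{m}$ and therefore alternates in sign from piece to piece. For $w_0\ge n$ every $\min$ selects its first argument and $I_{[n]}$ vanishes identically, consistent with \cref{lem:Ivashprod} since this is the fully decorrelated regime. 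Tracking the value as $w_0$ decreases below $n$ then yields, for instance, $I_{[n]}(n-1)=(-1)^n$ and $I_{[n]}(n-3)=(-1)^{n+1}(n-3)$, which for every $n\ge4$ carry opposite signs. This proves that $I_n$ is not holographically sign-definite.

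The only genuine work is the two evaluations and verifying that their signs are truly opposite uniformly in $n$, rather than merely case by case; the slope computation through the partial alternating binomial identity is exactly what makes this uniform, and I expect that bookkeeping to be the main (though mild) obstacle. It is also worth noting that the argument pinpoints why the threshold sits at $n\ge4$: the second evaluation point $w_0=n-3$ is positive precisely when $n\ge4$, dovetailing with the fact that $I_2$ and $I_3$ remain sign-definite by SA and MMI.
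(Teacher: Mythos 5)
Your proposal is correct and takes essentially the same route as the paper's own proof: the identical star graph with unit legs and purifier edge weight $w_0$, the same cut formula $\S_\pK=\min\{\abs{\pK},\,w_0+n-\abs{\pK}\}$, and the same two evaluation points $w_0=n-1$ and $w_0=n-3$, yielding $(-1)^n$ and $(-1)^{n+1}(n-3)$, which have opposite signs precisely for $n\ge4$. The extra piecewise-linear/slope analysis you sketch is a nice sanity check but adds nothing beyond the paper's direct computation.
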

\begin{proof}
    To prove this claim, it suffices to show that it holds on graphs for $n$ parties. This is because for any larger number of parties, one could just assign the extra parties to the purifier, thereby effectively reducing it back to an $n$-party problem. Consider a star graph with $n+1$ edges, involving all parties as well as the purifier. Assign weight $w_0$ to the purifier edge, and unit weight to the rest as indicated in \cref{fig:star}.\footnote{ \, This family of star graphs was also fruitful in proving structural properties of the HEC in \cite{Avis:2021xnz,Czech:2021rxe,Fadel:2021urx}.}
    The entropy of any nonempty subsystem $\pK \subseteq[n]$ on such a star graph is:
    \begin{equation}
        \S_\pK = \min \{ \abs{\pK} \,,\, w_0 + n - \abs{\pK} \} \, .
    \end{equation}
    Since this only depends on the cardinality of $\pK$, our quantity of interest $\mathcal{I}_n \equiv \sI_{[n]}$ becomes
    \begin{equation}
    \begin{aligned}
        \mathcal{I}_n &= -\sum_{k=1}^{n} \, \binom{n}{k} (-1)^{k} \min \{ k \,,\, w_0 + n - k \} \\
        &= -\sum_{k=1}^{\lfloor \frac{w_0+n}{2} \rfloor} \binom{n}{k} (-1)^{k} \, k - \sum_{k=\lceil\frac{w_0+n}{2} \rceil}^{n} \binom{n}{k} (-1)^{k} \, (w_0+n-k)
    \end{aligned}
    \end{equation}
    Letting $w_0=n-1$, this evaluates to
    \begin{equation}
    \label{eq:starsw1}
        \mathcal{I}_n = (-1)^n \, n - (-1)^n \, (n-1) = (-1)^n.
    \end{equation}
    On the other hand, letting $w_0=n-3$ leads to 
    \begin{equation}
        \mathcal{I}_n = -(-1)^n \, n (n-2) + (-1)^n \, (n (n-3)+3) = - (-1)^{n} \, (n-3).
    \end{equation}
    Hence $\mathcal{I}_n$ takes opposite signs in each of these star graphs, proving the claim.\footnote{ \, Note that the RHS vanishes for $n=3$, consistent with MMI.}
\end{proof}

\paragraph{Multipartite information for the entire system:}
The above two lemmas considered the typical case where the arguments of $I_n$ comprise a polychromatic subsystem $\pI \subseteq [\nN]$.  We now complete the characterization by considering what happens when the arguments comprise the entire purified system $[\nN+1]$. 
For the mutual information ($n=2$ case), the result is particularly simple: denoting the first argument by $\pI$, the second argument is then its complement $\cplmt{\pI} = [\nN+1]\smallsetminus \pI$, so applying the definition in \cref{eq:SA} and using $\S_\pI = \S_{\cplmt{\pI}}$ and $\S_{[\nN+1]} = 0$, we have
\begin{equation}\label{eq:I2full}
I_2(\pI:\cplmt{\pI})= 2 \, \S_\pI
\end{equation}
where we could think of the RHS as $2 \, I_1(\pI)$.  A similar result holds for every even $n$. Schematically, $I_n = 2 \, I_{n-1}$, where any one of the arguments on the LHS gets dropped for the RHS expression (and therefore the latter is automatically purification-symmetric). The result for odd $n$ is even simpler since there we have complete cancellation of terms, giving $I_n = 0$.

We summarize and prove these claims in the following lemma:
\begin{nlemma}
\label{lem:Ifullsystem}
    Consider the $n$-partite information $I_n$ evaluated on $n$ disjoint subsystems \emph{$\x{X}_i$} comprising a full $\N$-party system including the purifier, \emph{$\bigcup_i \x{X}_i = [\N+1]$}.\footnote{ \, We remark that here each $\x{X}_i$ subsystem may involve more than one of the parties in $[\N+1]$, so that the statement applies to a general $n\leq \N+1$ and not just the $n=\N+1$ case where each $\x{X}_i$ is a single party.}  Then
    \emph{
	\begin{equation}\label{eq:Inevenoddfull}
	I_n (\x{X}_1:\ldots:\x{X}_n) = 
		\begin{cases}
		2 \, I_{n-1}(\x{X}_1:\ldots:\x{X}_{n-1}) \ , \qquad & n \ \text{even} \\
		0 \ , \qquad  & n \ \text{odd} \\
		\end{cases}
	\end{equation}
    }
\end{nlemma}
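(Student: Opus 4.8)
The plan is to start from the entropy representation of $I_n$ and to exploit the one feature that distinguishes this case from the generic ones treated in \cref{lem:Ivashprod,lem:Inosign}, namely the purity of the global state on $[\nN+1]$. As in the proof of \cref{lem:Ivashprod}, the internal structure of each $\x{X}_i$ is irrelevant, so I treat the arguments as effective singletons and use \cref{eq:IfromS} to write
\begin{equation}
I_n(\x{X}_1:\ldots:\x{X}_n) = \sum_{\varnothing\ne S\subseteq[n]} (-1)^{\abs{S}+1}\,\S_{\x{X}_S}, \qquad \x{X}_S := \bigcup_{i\in S}\x{X}_i,
\end{equation}
where now $S$ ranges over index sets into the arguments rather than over subsystems. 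The new ingredient is that $\bigcup_i\x{X}_i=[\nN+1]$ carries a pure state, so that $\S_{\x{X}_S}=\S_{\x{X}_{[n]\smallsetminus S}}$ for every $S$; in particular the full term $S=[n]$ vanishes since $\S_{[\nN+1]}=0$.

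Next I would split the sum according to whether the distinguished index $n$ lies in $S$. The terms with $n\notin S$ run over $\varnothing\ne S\subseteq[n-1]$ and, by \cref{eq:IfromS} applied to the first $n-1$ arguments, reassemble into exactly $I_{n-1}(\x{X}_1:\ldots:\x{X}_{n-1})$. For the terms with $n\in S$ I write $S=T\cup\{n\}$ with $T\subseteq[n-1]$, so that $(-1)^{\abs{S}+1}=(-1)^{\abs{T}}$, and then invoke purity to replace $\S_{\x{X}_S}$ by $\S_{\x{X}_{[n-1]\smallsetminus T}}$. Reindexing by the complement $U=[n-1]\smallsetminus T$, for which $\abs{T}=(n-1)-\abs{U}$, and discarding the vanishing $U=\varnothing$ term, this block collapses to $(-1)^n\,I_{n-1}(\x{X}_1:\ldots:\x{X}_{n-1})$.

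Adding the two blocks then yields the single compact identity
\begin{equation}
I_n(\x{X}_1:\ldots:\x{X}_n) = \bigl(1+(-1)^n\bigr)\,I_{n-1}(\x{X}_1:\ldots:\x{X}_{n-1}),
\end{equation}
which equals $2\,I_{n-1}$ for even $n$ and $0$ for odd $n$, exactly as claimed; the fact that any one of the $n$ arguments (not just $\x{X}_n$) may be the one dropped on the right follows from the purification symmetry, making the right-hand side automatically purification-symmetric. The only step requiring genuine care — and the sole place an error could creep in — is the $n\in S$ block: one must propagate the sign factor correctly through the substitution $S=T\cup\{n\}$ and the complement reindexing, and confirm that the boundary terms ($S=\varnothing$ in the effective formula and $U=\varnothing$ after reindexing) drop out because the entropies of both the empty set and the full pure system vanish. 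Beyond this sign bookkeeping I anticipate no obstacle.
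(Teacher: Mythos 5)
Your proof is correct and is essentially the paper's own argument made explicit: both expand $I_n$ via \cref{eq:IfromS}, use purity of the global state to replace each entropy of a subset containing the distinguished argument by that of its complement, and then observe that each combination of the remaining $n-1$ arguments appears twice with relative sign $(-1)^n$, yielding $\bigl(1+(-1)^n\bigr)I_{n-1}$. Your reindexing $U=[n-1]\smallsetminus T$ is simply a formalized version of the paper's sign-parity bookkeeping, and your closing remark about purification symmetry of the right-hand side matches the paper's ``w.l.o.g.'' treatment of which argument is dropped.
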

\begin{proof}
In both cases, one can show the result by direct evaluation.  Since   $\bigcup_i \x{X}_i = [\N+1]$, one of the $\x{X}_i$'s contains the purifier; w.l.o.g., let us assume that it is in $\x{X}_n$.  Then we can replace  $\x{X}_n = [\N+1] \setminus \bigcup_{i=1}^{n-1} \x{X}_i$, and similarly any subsystem containing $\x{X}_n$ by the corresponding complementary subsystem.  When we re-express $I_n (\x{X}_1:\ldots:\x{X}_n)$ in terms of entropies using \cref{eq:IfromS}, a given combination of $\x{X}_i$'s with $i\in [n-1]$ appears twice in the resulting expression, with coefficient $\pm1$ depending on the number of terms combined for the original expression and for its complement.  For odd $n$, these coefficients are opposite, and therefore all terms cancel (except for the entropy of the full system, which however vanishes by purity, $\S(\x{X}_1 \ldots \x{X}_n) =0$).  For even $n$ both coefficients are the same, so they add constructively, giving the stated result.
\end{proof}

\paragraph{Geometrical significance of the $I_n$'s:}
So far we have discussed various useful properties of the multipartite information quantities, which will come into play both for proving our no-go theorem in \cref{sec:correlations} as well as for presenting the HEIs in a more compact form in \cref{ss:HEIrecasting}.  As a further motivation for considering the $I_n$'s, we note that they also provide a useful diagnostic of the basic geometrical features of the bulk configuration, in particular the connectivity of the entanglement wedges of the various subsystems.  
Let us characterize a bulk entanglement wedge of a composite subsystem $\x{AB}$ as \emph{connected} if there exists a continuous curve $\gamma$ through the bulk joining the boundary regions $\x{A}$ and $\x{B}$ which stays entirely within their entanglement wedge.\footnote{ \, 
	By entanglement wedge we mean the full codimension-0 bulk region as originally defined in \cite{Headrick:2014cta}, namely the bulk domain of dependence of the corresponding homology region (bounded by the said boundary region and its HRT surface \cite{Hubeny:2007xt}).  However, since the regions $\x{A}$ and $\x{B}$ must be spacelike-separated in order for $\xI{AB}$ to be well-defined, it is simple to show that $\gamma$ can be chosen to be spacelike, and the following connectivity statement can be reformulated to pertain to any Cauchy slice containing $\x{A}$ and $\x{B}$ on its boundary.
}
Then generically (i.e.\ when each subsystem has a unique HRT surface), the entanglement wedge of a composite subsystem $\x{AB}$ is connected between $\x{A}$ and $\x{B}$ if and only if the two subsystems are correlated ($\xI{AB} > 0$).  
This follows straightforwardly from the observation that $\xI{AB}=0$ iff the composite HRT surface for $\x{AB}$ consists of the individual HRT surfaces for $\x{A}$ and $\x{B}$, thereby disconnecting the joint entanglement wedge.  

This statement can be suitably generalized to use $I_n$ to characterize simultaneous connectivity of $n$-party entanglement wedges.  We define an $n$-party entanglement wedge of $\x{X}_1\ldots \x{X}_n$ as connected if there exists a bulk point $p$ along with $n$ continuous curves $\gamma_{pi}$ from $p$ to the respective boundary regions $\x{X}_i$, all contained entirely within the entanglement wedge.\footnote{ \, 
	Note that according to this definition, a connected entanglement wedge may still consist of multiple disjoint pieces (which could happen if at least one of the regions $\x{X}_i$ is comprised of multiple components), provided at least one of those pieces contains the full set $\{\gamma_{pi}\}$.
	If, on the other hand, each $\x{X}_i$ admits only a single entangling surface, then the entanglement wedge consists of a single piece and connectivity is tantamount to the existence of a set of curves $\gamma_{ij}$ joining all pairs of regions $\{\x{X}_i,\x{X}_j\}$ to each other within it.
}
Then as a consequence of \cref{lem:Ivashprod}, we have the following connectivity statement:

\begin{nprop}
\label{prop:InEWconnectivity}
    The multipartite information of $n$ disjoint subsystems \emph{$\x{X}_i$} in a generic\footnote{ \, 
    	By ``generic'' we mean that we have not fine-tuned to a `phase transition' when multiple extremal surfaces exchange dominance; in other words, each subsystem has a unique HRT surface, and these have independent areas with no accidental cancellations.
    } configuration is non-vanishing,
    \emph{
	\begin{equation}\label{eq:Inevenoddfull2}
	I_n (\x{X}_1:\ldots:\x{X}_n) \ne 0 \ ,
	\end{equation}
    }
    if and only if the joint entanglement wedge of \emph{$\x{X}_1\ldots \x{X}_n$} is connected.
\end{nprop}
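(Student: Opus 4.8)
The plan is to prove the biconditional in two directions, exploiting the fact that the statement is about a \emph{generic} configuration in the sense just defined, so that each $\x{X}_i$ has a unique HRT surface with no accidental area cancellations. In this generic setting the entropies behave like strictly ``nondegenerate'' cut weights on the associated holographic graph, which is precisely the regime in which \cref{lem:Ivashprod} can be leveraged and its converse controlled.

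For the easier direction ($\Leftarrow$ contrapositive, or equivalently $I_n = 0 \implies$ disconnected), I would argue the contrapositive: if the joint entanglement wedge of $\x{X}_1\ldots\x{X}_n$ is \emph{disconnected}, then $I_n = 0$. The idea is that disconnectivity means the bulk region splits into pieces such that no single piece contains curves reaching all $n$ boundary regions; geometrically this forces the minimal homology region for the union to decompose as a disjoint union, across which the RT/HRT areas are additive. This additivity is exactly the factorization hypothesis $\rho = \pi_{\mathcal{J}} \otimes \sigma_{\mathcal{K}}$ of \cref{lem:Ivashprod} at the level of entropies (cf.\ the SA-saturation condition \cref{eq:SA-saturation} in that proof), once we group the $\x{X}_i$ according to which disconnected component their curves can reach. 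Hence $I_n$ vanishes by \cref{lem:Ivashprod}. Concretely, I would observe that a disconnected $n$-party wedge admits a bipartition of the arguments $\{\x{X}_i\} = \mathcal{J} \sqcup \mathcal{K}$ with the minimal surface for any mixed subcollection being the disjoint union of the surfaces on each side, which is the entropy factorization needed.

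For the forward direction ($\Rightarrow$, i.e.\ connected $\implies I_n \ne 0$), I expect to reason by induction on $n$ using the splitting relation \cref{eq:Inadditivitysplit}, with the base cases $n=2$ ($I_2 > 0$ by SA, the stated bipartite connectivity fact) and $n=3$ (MMI, $-I_3 \ge 0$, with strict inequality whenever the three-party wedge is genuinely connected). The strategy is that connectivity of the $n$-party wedge at a common point $p$ lets me merge two of the arguments, say $\x{A}$ and $\x{B}$, into $\x{AB}$ and track how $I_{n}(\x{A}:\x{B}:\aset{\pI}) = I_{n-1}(\x{A}:\aset{\pI}) + I_{n-1}(\x{B}:\aset{\pI}) - I_{n-1}(\x{AB}:\aset{\pI})$ inherits nonvanishing from the connectivity of the lower-party wedges, using the genericity assumption to rule out an accidental cancellation among the three terms on the right.

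The main obstacle will be this forward direction, specifically controlling cancellations in \cref{eq:Inadditivitysplit}: each term can individually be nonzero, yet their signed combination could in principle vanish, and ruling this out is exactly where the ``generic'' hypothesis (independent areas, no accidental cancellations) must do real work. I would address it by arguing that for a connected $n$-party wedge the three lower-$I$ terms correspond to strictly distinct minimal-surface configurations whose areas are algebraically independent, so a cancellation would be a nongeneric coincidence excluded by hypothesis; alternatively, and perhaps more cleanly, I would give a direct graph-model argument showing that a connected wedge forces at least one RT surface piece internal to the joint homology region that is \emph{not} shared by any single-party or bipartite decomposition, so its area survives in $I_n$ with a definite sign. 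Pinning down this ``surviving piece'' rigorously, and confirming its coefficient in the $\sI_\pI$ expansion \cref{eq:IfromS} is nonzero, is the crux of the proof.
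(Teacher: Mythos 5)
Your proposal contains a genuine gap, and it sits in the direction you call ``easier.'' You claim that a disconnected joint entanglement wedge always admits a bipartition of the arguments $\{\x{X}_i\} = \mathcal{J} \sqcup \mathcal{K}$ across which the entropies factorize, so that \cref{lem:Ivashprod} applies directly. This is false. The counterexample (which the paper flags explicitly, citing fig.~9 of \cite{Hubeny:2018trv}) is a configuration with the entanglement structure of pairwise Bell pairs: say three parties where $\x{X}_1$ shares a Bell pair with $\x{X}_2$, $\x{X}_2$ with $\x{X}_3$, and $\x{X}_3$ with $\x{X}_1$. The joint wedge is disconnected in the sense of the proposition (no single component reaches all three regions), yet \emph{every} bipartition of the arguments has strictly positive mutual information, so no factorization $\rho = \pi_{\mathcal{J}} \otimes \sigma_{\mathcal{K}}$ with $\mathcal{J},\mathcal{K}$ unions of whole $\x{X}_i$'s exists. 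Your proposed grouping of the $\x{X}_i$ ``according to which disconnected component their curves can reach'' is not well defined for exactly this reason: a single $\x{X}_i$ can straddle several components. The missing idea is \emph{fine-graining}: one must split each $\x{X}_i$ into sub-parties aligned with the wedge components, apply the bipartition/factorization argument at the fine-grained level, and then show (by the same computation as in \cref{lem:Ivashprod}) that the coarse-grained $I_n$ is obtained additively from the fine-grained quantities and hence also vanishes. Without this step your argument only covers the special case in which each $\x{X}_i$ lies entirely within one wedge component.

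On the forward direction (connected $\Rightarrow I_n \ne 0$), your primary plan of inducting via \cref{eq:Inadditivitysplit} is also shaky: connectivity of the $n$-party wedge says nothing about the wedges of $(\x{A}:\aset{\pI})$ and $(\x{B}:\aset{\pI})$ separately, and the cancellations you need to exclude are \emph{structural}, not accidental --- the Bell-pair example gives $I_3 = 0$ exactly, with no fine-tuning --- so ``genericity'' cannot be invoked term by term to forbid them. However, your fallback sketch is essentially the paper's actual proof: connectivity guarantees the existence of an extremal surface anchored on (at least one component of) each $\x{X}_i$; such a surface can contribute only to $\S(\x{X}_1\ldots\x{X}_n)$ and never to the entropy of any proper subcollection, so its area enters $I_n$ with a single nonzero coefficient and is generically independent of all other areas, whence $I_n \ne 0$. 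You correctly identified this as the crux but left it unfinished; it is the complete and shorter route, and the induction scaffolding around it can simply be discarded.
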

\begin{proof}
	To warm-up, let us first consider the simple case where each region $\x{X}_i$ has a single entangling surface.  
	Suppose the entanglement wedge of $\x{X}_1\ldots \x{X}_n$  is disconnected.  Then there exists a bipartition of the system $\bigcup_i \x{X}_i = \pI \cup \pK$ such that the entanglement wedge of $\pI \pK$ is disconnected, and by the previous argument this means that $\mi{\pI}{\pK}=0$.  Hence the corresponding density matrix $\rho_{\pI \pK}$ factorizes, and by \cref{lem:Ivashprod}, $	I_n (\x{X}_1:\ldots:\x{X}_n) = 0$.

	However, there can be situations where the joint entanglement wedge is disconnected, but where all bipartitions have connected entanglement wedges.  A simple example of this is if the configuration has the entanglement structure of a collection of Bell pairs, such as indicated in fig.~9 of \cite{Hubeny:2018trv}.  In this case, we can first fine-grain the full system so as to have the disjoint pieces of the entanglement wedge correspond to disjoint subsystems, and apply the above reasoning to derive the vanishing of the corresponding fine-grained multipartite information.  Then by the same calculation as in \cref{lem:Ivashprod}, the coarse-grained $	I_n (\x{X}_1:\ldots:\x{X}_n) $ combines additively from the fine-grained one, leading to the desired result $I_n (\x{X}_1:\ldots:\x{X}_n) =0$.  This proves (the contrapositive of) the ``only if'' ($\Rightarrow$) part of the statement.

	Conversely, suppose the entanglement wedge of $\x{X}_1\ldots \x{X}_n$  is connected.  We can again bootstrap our argument.  In the  simple case where each region $\x{X}_i$ is connected and has a single entangling surface, connectivity of the entanglement wedge implies that there is a connected HRT surface for $\x{X}_1\ldots \x{X}_n$, whose area is therefore generically independent of the other HRT surfaces for the various subsystems, and therefore will not cancel in the $I_n$ expression.  In the more general case of the  $\x{X}_i$'s consisting of multiple components, even if the HRT surface for $\x{X}_1\ldots \x{X}_n$ consists of multiple components, connectivity of the entanglement wedge guarantees that there is still at least one extremal surface which is anchored on at least one component of each $\x{X}_i$, and therefore again cannot be ``canceled" by any of the subsystem HRT surfaces.  This proves the ``if'' ($\Leftarrow$) part of the statement.
\end{proof}

In using the above arguments, it is important to realize that individual mutual information quantities are independent of the multipartite information; for example, there can be configurations for which
\begin{equation}\label{eq:In0I2}
	I_2 (\x{X}_i:\x{X}_j) > 0 \quad \forall \ i,j
	\qquad \text{but} \qquad
	I_n (\x{X}_1:\ldots:\x{X}_n) =0 \ ,
\end{equation}
as well as configurations for which
\begin{equation}\label{eq:InI20}
	I_2 (\x{X}_i:\x{X}_j) =0 \quad \forall \ i,j
	\qquad \text{but} \qquad
	\abs{I_n (\x{X}_1:\ldots:\x{X}_n)} > 0 \ .
\end{equation}
(Examples of both of these cases were presented e.g. in\ \cite{Hubeny:2018trv}, but an analogous graph-based example is e.g.\ a collection of all $ij$ Bell pairs for the former and a uniform star graph of \cref{fig:star} with $w_0=1$ for the latter.)
It is the bipartitioning of the full system which is related to the multipartite information.  The geometrical significance of the $I_n$ is the simultaneous connectivity within the single entanglement wedge; in other words, $\abs{I_n}>0$ diagnoses the presence of a bulk region which lies inside the joint entanglement wedge but outside all possible subsystem entanglement wedges.

\subsection{Conditional multipartite information}
\label{ss:CIn}

Now that we have explored the properties of multipartite information $I_n$, let us consider a related but slightly more intricate construct, namely that of \emph{conditional} multipartite information.
The motivation for this comes from the  observation that conditional quantities have a natural quantum information theoretic interpretation.  The conditional entropy, where we condition $\Sx{X}$ on $\x{Y}$, is defined as $S(\x{X}|\x{Y}) \equiv \Sx{XY}-\Sx{Y}$, and measures the amount of quantum communication needed to obtain complete quantum information contained in the joint state $\rhox{XY}$ if we already know the marginal $\rhox{Y}$.
Classically, this quantity is non-negative, but quantum mechanically it can take either sign.\footnote{ \, When this quantity is negative, its absolute value can be interpreted as the additional amount of quantum information which $\x{X}$  can transmit to $\x{Y}$ ``for free'' (i.e.\ using only classical communication).
}
On the other hand, for any quantum state the conditional mutual information is non-negative by SSA, cf.\ \cref{eq:SSA}.  While we will show momentarily that the higher $n$ conditional multipartite information is no longer sign definite, it nevertheless seems to provide a nicer packaging for the HEIs, as discussed in \cref{ss:HEIrecasting}.

In general, a conditional information quantity (e.g.\ conditioned on subsystem $\x{A}$) can be obtained by writing it out as a linear combination of entropies $\S_\pI$ with $\pI \nsupseteq \x{A}$ and replacing each entropy $\S(\pI)$ by the conditional entropy $S(\pI | \x{A})$.
Using the notation from \cref{eq:Innotations}, for $\aset{\pI}$ comprising $n-2$ singletons distinct from $\x{A}$, $\x{B}$, or $\x{C}$, we have two useful identities for conditional multipartite information $I_n(\x{B}:\aset{\pI} \vert \x{A}) $, viewed as the multipartite information $I_n(\x{B}:\aset{\pI})$ conditioned on $\x{A}$:
\begin{equation} \label{eq:CondMultInf}
\begin{aligned}
    I_n(\x{B}:\aset{\pI} \vert \x{A}) 
    &=  I_n(\x{AB}:\aset{\pI}) - I_n(\x{A}:\aset{\pI}) \\
    &=  I_n(\x{B}:\aset{\pI}) - I_{n+1}(\x{A}:\x{B}:\aset{\pI})
\end{aligned}
\end{equation}
where we have used the definition of conditioning in the first line and \cref{eq:Inadditivitysplit} in the second line.
We now use these relations to show the following:

\begin{nlemma}
\label{lem:condInosign}
Conditional $I_n$ with $n\geq3$ is not holographically sign-definite.
\end{nlemma}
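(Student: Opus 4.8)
The plan is to show that conditional multipartite information $I_n(\x{B}:\aset{\pI}\vert\x{A})$ with $n\geq3$ can take either sign on holographic configurations, by reducing to the unconditional case already settled in \cref{lem:Inosign}. The key observation is the first identity in \cref{eq:CondMultInf}, which writes $I_n(\x{B}:\aset{\pI}\vert\x{A}) = I_n(\x{AB}:\aset{\pI}) - I_n(\x{A}:\aset{\pI})$. If I can find one family of holographic states on which the conditioning term $I_n(\x{A}:\aset{\pI})$ vanishes, then the conditional quantity collapses to the ordinary $n$-partite information $I_n(\x{AB}:\aset{\pI})$, whose non-sign-definiteness for $n\geq4$ is guaranteed by \cref{lem:Inosign}. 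Treating $\x{AB}$ as a single (possibly polychromatic) argument, this already yields both signs whenever $n\geq4$.

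First I would reduce to graphs exactly as in the proof of \cref{lem:Inosign}: it suffices to exhibit holographic graph models, since extra parties can be assigned to the purifier. The natural vehicle is again a variant of the star graph of \cref{fig:star}. The cleanest route is to arrange the state so that $\x{A}$ is decorrelated from the collection $\{\x{B},\aset{\pI}\}$ that furnishes the arguments of $I_n(\x{A}:\aset{\pI})$; by \cref{lem:Ivashprod} this forces $I_n(\x{A}:\aset{\pI}) = 0$, so that $I_n(\x{B}:\aset{\pI}\vert\x{A}) = I_n(\x{AB}:\aset{\pI})$. Then I import the two star-graph weight choices $w_0 = n-1$ and $w_0 = n-3$ from \cref{lem:Inosign} applied to the $n$ arguments $\x{AB},\aset{\pI}$ (with $\x{A}$ sitting in its own decorrelated piece), which produce opposite signs. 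This settles all $n\geq4$.

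The genuinely new case is $n=3$, where \cref{lem:Inosign} gives nothing (indeed the RHS there vanishes at $n=3$, consistent with MMI). Here I would instead use the \emph{second} identity in \cref{eq:CondMultInf}, namely $I_3(\x{B}:\aset{\pI}\vert\x{A}) = I_3(\x{B}:\aset{\pI}) - I_4(\x{A}:\x{B}:\aset{\pI})$, where $\aset{\pI}$ comprises a single singleton $\x{C}$. Now $I_3(\x{B}:\x{C})$ is sign-definite (non-positive, by MMI), but $I_4(\x{A}:\x{B}:\x{C}:\dots)$ is \emph{not}, by \cref{lem:Inosign}. The strategy is to engineer states in which the $I_3$ term stays fixed while the $I_4$ term dominates and flips sign — for instance choosing configurations where $I_3(\x{B}:\x{C}\vert\x{A})$ directly inherits both signs. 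A transparent realization is again a star graph on four parties plus purifier with a tunable weight $w_0$: one evaluates $I_3(\x{B}:\x{C}\vert\x{A})$ as a conditional quantity on this graph and checks by direct computation (entropies depending only on subsystem cardinality, as in \cref{lem:Inosign}) that two choices of $w_0$ give opposite signs.

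The main obstacle I anticipate is the $n=3$ base case: unlike $n\geq4$, it cannot be reduced to \cref{lem:Inosign} by a decorrelation trick, since the conditioning subtracts a \emph{non}-sign-definite $I_4$ from a sign-definite $I_3$, and one must verify that the $I_4$ contribution genuinely controls the sign rather than being swamped by the MMI-fixed $I_3$ term. This requires an explicit graph computation with a carefully chosen weight assignment, and care that the chosen states are genuinely holographic (i.e.\ arise from valid cut-minimization on a graph) rather than merely quantum-mechanical. Once the $n=3$ example is pinned down, the result propagates to all $n\geq3$ — for $n>3$ either via the decorrelation reduction above or by embedding the $n=3$ example and assigning surplus parties to the purifier.
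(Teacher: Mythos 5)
Your proposal is correct, and for $n\geq4$ it is essentially the paper's own argument: both proofs decorrelate the conditioning party (you via the first identity in \cref{eq:CondMultInf}, the paper via the second; under factorization these collapse to the same statement), so that by \cref{lem:Ivashprod} the conditional $I_n$ reduces to an unconditioned $I_n$ on an unrestricted state, and \cref{lem:Inosign} finishes. Where you genuinely diverge is the $n=3$ case. The paper handles it with no computation at all, using purification antisymmetry: on a pure five-party state, $I_3(\x{A}_1:\x{A}_2:\x{A}_3|\x{A}_4) = -I_3(\x{A}_1:\x{A}_2:\x{A}_3|\x{A}_5)$ (a consequence of \cref{lem:Ifullsystem}), so sign-definiteness would force conditional $I_3$ to vanish identically, which is false; as the paper notes, this symmetry argument extends to all odd $n$ for free. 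You instead propose a direct star-graph computation, which you defer but which does succeed, even with the very weights you suggest importing: on the star of \cref{fig:star} with four parties $\x{A},\x{B},\x{C},\x{D}$ at unit weight and purifier weight $w_0$, one has $\ctxI{B:C:D}{A} = \xS{AB}+\xS{AC}+\xS{AD}-\xS{ABC}-\xS{ABD}-\xS{ACD}+\xS{ABCD}-\xS{A}$, which evaluates to $+2$ at $w_0=0$ (a pure four-party star) and $-2$ at $w_0=2$, confirming your expectation that the non-sign-definite $I_4$ piece controls the sign rather than being swamped by the MMI-fixed $I_3$ term. The trade-off: the paper's route is computation-free and automatically covers all odd $n$, while yours is constructive and uniform with the $n\geq4$ treatment, exhibiting explicit holographic configurations realizing both signs.

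One caveat worth fixing: your $n=3$ paragraph contains a notational slip that, taken literally, would sink the argument. Conditional $I_3$ has three arguments plus the conditioning party; the quantity $I_3(\x{B}:\x{C}\,|\,\x{A})$ you actually wrote (two arguments) is the conditional mutual information, which \emph{is} sign-definite by SSA, and likewise ``$I_3(\x{B}:\x{C})$ non-positive by MMI'' conflates mutual and tripartite information. Your surrounding text (``four parties plus purifier'', the four-argument $I_4$ term) makes clear you mean $\ctxI{B:C:D}{A} = \txI{B:C:D} - \xI{ABCD}$, so the idea is sound, but the arguments need to be written consistently.
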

\begin{proof}
We first consider the case $n\geq 4$,
where by a judicious choice of the state, we can reduce the argument to that of \cref{lem:Inosign}.
We rewrite conditional $I_n$ as
\begin{equation} \label{eq:CondMultInfA}
    I_n (\x{A}_1:\ldots:\x{A}_n | \x{A}_{n+1}) 
    = \sI_{[n]} - \sI_{[n+1]}
\end{equation}
and consider a product state
\begin{equation} 
    \rho_{\x{A}_1\dots \x{A}_n \x{A}_{n+1}} = \pi_{\x{A}_1 \dots \x{A}_n} \otimes \sigma_{\x{A}_{n+1}}.
\end{equation}
In such a state the last term of \cref{eq:CondMultInfA} vanishes by \cref{lem:Ivashprod}, so we have
\begin{equation} \label{eq:CondMultInfAfact}
    I_n (\x{A}_1:\ldots:\x{A}_n | \x{A}_{n+1})\big|_\rho 
    =\sI_{[n]}\big|_\pi \ .
\end{equation}
Since we placed no restriction on $\pi_{\x{A}_1 \dots \x{A}_n}$,
\cref{lem:Inosign} then implies that the RHS of \cref{eq:CondMultInfAfact} can take either sign, so conditional $I_n$ with $n\geq4$ is not holographically sign-definite. 

Now we consider the $n=3$ case. Consider w.l.o.g.\ a pure state on $\x{A}_1\dots \x{A}_5$. Purification symmetry of $I_3$ implies (cf.\ \cref{eq:CI3swap})\footnote{ \, 
Written out explicitly, $I_3(\x{A}_1:\x{A}_2:\x{A}_3|\x{A}_4) = I_3(\x{A}_1:\x{A}_2:\x{A}_3\x{A}_4) - I_3(\x{A}_1:\x{A}_2:\x{A}_4) = I_3(\x{A}_1:\x{A}_2:\x{A}_5) - I_3(\x{A}_1:\x{A}_2:\x{A}_3\x{A}_5) = -I_3(\x{A}_1:\x{A}_2:\x{A}_3|\x{A}_5)$.}
\begin{equation}
    I_3(\x{A}_1:\x{A}_2:\x{A}_3|\x{A}_4) = -I_3(\x{A}_1:\x{A}_2:\x{A}_3|\x{A}_5) \, .
\end{equation}
Therefore conditional $I_3$ being sign-definite would imply conditional $I_3$ vanishes on any state for any choice of regions, which is obviously not true.\footnote{ \, Since this argument only uses purification symmetry of $I_3$, it also works for all the odd $n$ cases as a corollary of \cref{lem:Ifullsystem}.}
\end{proof}

\section{HEIs and Superbalance}
\label{sec:HIQ}

In this section we review a crucial property, dubbed superbalance, pertaining to all holographic information quantities describing the non-SA facets \cite{He:2020xuo}.

A general information quantity $Q$ can be written as a linear combination of subsystem entropies $\S_\pI$, as
\begin{equation}
\label{eq:IQ}
    Q = \sum_{\pI \subseteq [\N]} c_\pI \, \S_\pI
\end{equation}
with some rational coefficients $c_\pI$ which can take either sign.

Although the $\nD=2^{\N}-1$ individual entropies $\S_\pI$ in \cref{eq:IQ} are independent of each other, it is interesting to consider a specific subset of terms which involve a given subsystem $\pJ$ by considering just the terms with $\pI \supseteq \pJ$.
We say that $Q$ is \emph{balanced in $\pJ$} if the appearance of this subsystem ``cancels out'' in $Q$ in the sense that
\begin{equation}
\label{eq:Jbalance}
    \sum_{\pI\supseteq\pJ} c_\pI = 0 \, .
\end{equation}
Typically, we think of balance just in the singletons $\pJ=\{j\}$ for $j\in[\N]$, and say that $Q$ is \textit{balanced} if it is balanced for every $j\in[\N]$, i.e., if
\begin{equation}
\label{eq:balance}
    \sum_{\pI\ni j} c_\pI = 0 \qquad \forall~j\in[\N] \, .
\end{equation}
This property is important when looking at entropy inequalities, in which case it corresponds to having each party showing up in the  entropy terms on one side as many times as on the other when the inequality is canonically written involving only positive signs. 
The operational significance of balance is that for any configuration of boundary subsystems corresponding to non-adjoining regions, a balanced quantity is guaranteed to be UV-finite, since all the UV divergences cancel out.\footnote{ \,
	For adjoining regions the mutual information is no longer finite, though this UV divergence is well understood \cite{Casini:2015woa}, and arises from the UV correlations across the joint entangling surface.
}

While well-defined on specific information quantities, the property of balance can be easily seen to be preserved under $Sym_{\N}$ but not under the purification symmetry in general.\footnote{ \, 
    For example, the mutual information $\xS{A}+\xS{B}-\xS{AB}$ is clearly balanced. However, e.g.\ for $\N=2$ we may use $Sym_{3}$ to exchange $\x{B}$ with the purifier $O$. Then, using purity to canonically write $\xS{O}=\xS{AB}$ and $\xS{AO}=\xS{B}$, one ends up with $\xS{A}+\xS{AB}-\xS{B}$ which is clearly not balanced.
} This is undesirable since, given the $S_{\N+1}$ symmetry of the HEC, we would like to probe its structure using objects whose properties are invariants of $Sym_{\N+1}$ as well. A property of $Q$ is invariant under $Sym_{\N+1}$ if every quantity in the symmetry orbit of $Q$ is, in which case it becomes a property of the symmetry orbit itself. This originally motivated \cite{Hubeny:2018ijt} to upgrade balance to a property of symmetry orbits by demanding that it be preserved under the purification symmetry, which suggests the following definition:
\begin{ndefi}[Superbalance]
    An information quantity is said to be superbalanced if every element in its $S_{\N+1}$ symmetry orbit is balanced.
\end{ndefi}
This property can be expressed algebraically in a way that is very analogous to \cref{eq:balance} for balance. Namely, as shown in Lemma $4$ of \cite{Avis:2021xnz}, $Q$ is superbalanced if and only if
\begin{equation}
\label{eq:superbalance}
    \sum_{\pJ\supseteq\pI} {c_{\pJ}} = 0 \qquad \forall~\varnothing\ne {\pI}\subseteq[\N]~~\text{with}~~\abs{\pI}\le 2 \, .
\end{equation}
In other words, $Q$ is superbalanced if it is balanced not only in all singletons but also in all doubletons.  

The I-basis is remarkably successful in making the properties of balance and superbalance manifest. To see this, consider re-expressing $Q$ in this basis as
\begin{equation}
\label{eq:relbases}
    Q = \sum_{\pI \subseteq [\N]} c_\pI\, \S_\pI = \sum_{\pJ \subseteq [\N]} q_\pJ \, \sI_{\pJ} \, ,
\end{equation}
Using \cref{eq:SfromI}, one can easily read off that the I-basis coefficients $q_\pJ$ can be written in terms of the S-basis coefficients $c_\pI$ through\footnote{ \, Using standard matrix notation so that \cref{eq:relbases} and \cref{eq:SfromI} respectively read $c^T \S=q^T \sI$ and $\S=M \, \sI$ for $M$ the transformation matrix between bases, one immediately finds $q=M^Tc$.}
\begin{equation}
\label{eq:relcoefbases}
    q_\pJ = (-1)^{\abs{\pJ}-1}\sum_{\pI\supseteq\pJ} c_\pI \, .
\end{equation}
Up to signs, this is of course nothing but the left-hand side of \cref{eq:Jbalance}, whose vanishing appears in the definitions of both balance and superbalance. Hence we obtain the following result (cf.\ Definition 10 of \cite{Hubeny:2018ijt}) 
\begin{nprop}
\label{prop:balsuperbal}
    An information quantity is balanced if and only if its representation in the I-basis does not involve any $\sI_1$ terms, and it is superbalanced if and only if it additionally does not involve any $\sI_2$ terms.
\end{nprop}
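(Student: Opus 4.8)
The plan is to read the result directly off the coefficient-transformation formula \cref{eq:relcoefbases}, which already expresses each I-basis coefficient $q_\pJ$ as a signed version of the balance sum $\sum_{\pI \supseteq \pJ} c_\pI$. Since the overall sign $(-1)^{\abs{\pJ}-1}$ is always $\pm 1$ and hence never zero, the vanishing of a given $q_\pJ$ is equivalent to the vanishing of the corresponding sum. The whole proposition therefore reduces to matching up which $\pJ$'s label the $\sI_1$ and $\sI_2$ terms against the singleton and doubleton conditions appearing in \cref{eq:balance,eq:superbalance}.

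First I would establish the balance half. Specializing \cref{eq:relcoefbases} to a singleton $\pJ = \{j\}$ gives $q_{\{j\}} = \sum_{\pI \ni j} c_\pI$, since the prefactor is $(-1)^{0}=1$. The $\sI_1$ terms in the I-basis expansion are exactly those with singleton index $\{j\}$, so the representation contains no $\sI_1$ terms precisely when $q_{\{j\}} = 0$ for every $j \in [\N]$. By the displayed identity this holds if and only if $\sum_{\pI \ni j} c_\pI = 0$ for all $j$, which is the defining condition of balance \cref{eq:balance}. This proves the first biconditional in both directions at once.

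Next I would treat superbalance by additionally inspecting the doubleton coefficients. Taking $\pJ = \{i,j\}$ in \cref{eq:relcoefbases} yields $q_{\{i,j\}} = -\sum_{\pI \supseteq \{i,j\}} c_\pI$, the prefactor now being $(-1)^{1}=-1$. The $\sI_2$ terms are exactly those indexed by doubletons, so their absence is equivalent to $q_{\{i,j\}}=0$ for all pairs, i.e.\ to balance in every doubleton. Combining this with the singleton result, the I-basis expansion lacks both $\sI_1$ and $\sI_2$ terms if and only if $\sum_{\pI \supseteq \pJ} c_\pI$ vanishes for all $\pJ$ with $\abs{\pJ} \le 2$, which is precisely the algebraic characterization of superbalance in \cref{eq:superbalance}. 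This gives the second biconditional.

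There is no serious obstacle here: once \cref{eq:relcoefbases} is in hand the argument is a direct correspondence. The only points requiring care are the bookkeeping ones, namely confirming that the sign prefactor, being $\pm1$, never obstructs the equivalence between ``$q_\pJ = 0$'' and ``the balance sum vanishes,'' and correctly identifying the $\sI_1$ and $\sI_2$ terms with singleton- and doubleton-indexed coefficients respectively. One could also remark in passing that this is exactly what makes the I-basis render balance and superbalance manifest, as anticipated in the surrounding discussion.
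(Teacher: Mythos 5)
Your proposal is correct and follows exactly the paper's own argument: the paper's proof likewise consists of reading the result off the coefficient relation \cref{eq:relcoefbases} together with the defining conditions \cref{eq:balance} and \cref{eq:superbalance}, with your write-up simply making explicit the singleton/doubleton specializations and the observation that the sign prefactor $(-1)^{\abs{\pJ}-1}$ never vanishes. No gaps; this is the same proof, spelled out in slightly more detail.
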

\begin{proof}
    Writing $Q$ as in \cref{eq:relbases} and using \cref{eq:relcoefbases}, the result follows from the defining properties of balance in \cref{eq:balance} and superbalance in \cref{eq:superbalance}.
\end{proof}
In \cite{Hubeny:2018ijt}, it was observed that any superbalanced information quantity is finite on all smooth configurations, even those involving adjoining regions where a merely balanced quantity would diverge.  To show this, it suffices to demonstrate finiteness for any $I_3$, since a superbalanced information quantity can be expressed as a finite sum of $I_3$'s.\footnote{ \, This follows from \cref{prop:balsuperbal}, and the fact that all I-basis elements $\sI_n$ for $n\geq3$ can be spanned by $I_3$'s with polychromatic arguments, as can be seen by applying \cref{eq:Inadditivitysplit} recursively.}  It is easy to check that the area divergences from common parts of entangling surface all cancel  in the combination of entropies given by $I_3$.  Indeed, since the subleading UV divergences (arising from corner terms etc.\ in higher-dimensional configurations) are local, they all likewise cancel (this was already argued for all $\sI_n$'s with $n\ge 3$ in \cite{Hayden:2011ag}), which establishes finiteness in even greater generality.

A powerful property of HEIs that is also made manifest by the I-basis is the requirement that the coefficients of the $I_n$ terms alternate in sign as follows \cite{Hubeny:2018ijt}: 
\begin{nprop}
\label{prop:relcoefbases}
    An information quantity defines a valid HEI only if, written as in \eqref{eq:relbases}, the I-basis coefficients satisfy
    $$(-1)^{\abs{\pK}} \, q_\pK \ge 0 \ , \qquad \forall \ \pK \subseteq[\nN]\, .$$
\end{nprop}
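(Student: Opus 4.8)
The plan is to establish this necessary condition by evaluating $Q$ on the family of star graphs of \cref{fig:star} and tuning the free edge weight so that, for each target subset $\pK$, the evaluation projects $Q$ onto the single I-basis coefficient $q_\pK$. Since a valid HEI obeys $Q\ge0$ on every holographic configuration, each such evaluation produces one sign constraint, and the whole game is to engineer a graph on which all but one term of the I-basis expansion drop out.

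Fix $\pK\subseteq[\nN]$ with $k=\abs{\pK}$, and take the star graph whose leaves are the parties of $\pK$ together with the purifier, assigning unit weight to the $k$ party-edges and weight $w_0$ to the purifier-edge, while the remaining $\nN-k$ parties are left as decoupled zero-entropy vertices. Expanding $Q$ in the I-basis as in \cref{eq:relbases}, the first step is to kill most terms: any $\sI_\pJ$ with $\pJ\nsubseteq\pK$ has among its arguments a singleton outside $\pK$ whose density matrix factorizes off, so by \cref{lem:Ivashprod} it vanishes on this graph. The surviving terms all have $\pJ\subseteq\pK$, and since the graph is symmetric under permutations of the $\pK$-leaves, each such $\sI_\pJ$ depends only on $m=\abs{\pJ}$; calling this common value $\sI_m(w_0)$, we obtain $\evalat{Q}{\mathrm{star}}=\sum_{m}\tilde q_m\,\sI_m(w_0)$ with $\tilde q_m=\sum_{\pJ\subseteq\pK,\,\abs{\pJ}=m}q_\pJ$, and in particular $\tilde q_k=q_\pK$ since $\pK$ is its own only size-$k$ subset.

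The crux is the choice $w_0=k-1$. With the star entropies $\S_\pI=\min\{\abs{\pI},\,w_0+k-\abs{\pI}\}$ for $\pI\subseteq\pK$, at this weight the minimum is attained by the party-side cut for every proper subset, so $\S_\pI=\abs{\pI}$ whenever $\abs{\pI}<k$, and a short binomial computation then forces $\sI_m(k-1)=0$ for all $2\le m\le k-1$. The minimum switches only for the full set $\pI=\pK$, which is precisely what keeps the top term alive: since the $\nN-k$ trivial parties never affect any entropy of a subset of $\pK$, this is the $n=k$ instance of \cref{lem:Inosign}, giving $\sI_k(k-1)=(-1)^k$ as in \cref{eq:starsw1}. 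The only possible contamination is the $m=1$ piece $\tilde q_1\,\sI_1(k-1)=\tilde q_1$; but any HEI is balanced (subadditivity is balanced and all non-SA HEIs are superbalanced), so by \cref{prop:balsuperbal} its I-basis expansion carries no $\sI_1$ terms and $\tilde q_1=0$. Hence $\evalat{Q}{\mathrm{star}}=(-1)^k q_\pK$, and $Q\ge0$ on this holographic graph yields $(-1)^k q_\pK\ge0$; the case $k=1$ is immediate from balance ($q_{\{j\}}=0$), and as $\pK$ was arbitrary the claim follows.

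The step I expect to be the real obstacle is the isolation itself. One cannot simply find a holographic state on which $Q$ collapses to a bare multiple of $\sI_\pK$, because the pure I-basis direction $\sI_\pK$ maps to an entropy vector with negative---hence unphysical---components and so lies outside the cone of holographic entropies. The substance of the argument is therefore the observation that the single tuning $w_0=k-1$ simultaneously annihilates all the intermediate multipartite informations $\sI_2,\ldots,\sI_{k-1}$ while leaving $\sI_k$ nonzero; checking this simultaneous vanishing through the piecewise-linear $\min$-structure of the star entropies is the one genuinely computational point, and the leftover $\sI_1$ term has to be removed by appealing to balance rather than by any further choice of graph.
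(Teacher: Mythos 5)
Your proof is correct, and it rests on the same core construction as the paper's own proof: the star graph over $\pK$ with unit party-edges and purifier-edge weight $w_0=\abs{\pK}-1$, \cref{lem:Ivashprod} to kill every $\sI_\pJ$ with $\pJ\nsubseteq\pK$, the binomial computation to kill the intermediate $\sI_\pJ$ with $2\le\abs{\pJ}<\abs{\pK}$, and \cref{eq:starsw1} to read off $\sI_\pK=(-1)^{\abs{\pK}}$.

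Where you genuinely depart from the paper is in the treatment of the $\sI_1$ terms, and there your version is the more careful one. The paper asserts that \emph{all} $\sI_\pI$ with $\pI\subsetneq\pK$ vanish on the star graph, citing the identity $\sum_{j=1}^{m}\binom{m}{j}(-1)^{j}j=0$; but this identity holds only for $m\ge2$, and in fact $\sI_{\{a\}}=\S_{\{a\}}=1$ for every $a\in\pK$, so the graph evaluation by itself yields only the weaker constraint $(-1)^{\abs{\pK}}q_\pK+\sum_{j\in\pK}q_{\{j\}}\ge0$. You spot exactly this contamination and cancel it by invoking balance through \cref{prop:balsuperbal}. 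The trade-off is that your argument is no longer a consequence of holographic validity alone: it requires balancedness as an input, which for non-SA facets rests on \cref{thm:superbalhec} (an external result of \cite{He:2020xuo} that the paper quotes only after this proposition --- so there is no circularity, but it is a heavier hypothesis than the paper intends to use). That hypothesis is, however, unavoidable: a purified instance of SA such as Araki--Lieb, $\Sx{AB}+\Sx{B}-\Sx{A}=2\,\Sx{B}-\mxI{A:B}$, is a facet of the HEC and yet has I-basis coefficients $q_{\x{B}}=2$ and $q_{\x{AB}}=-1$, both violating the stated sign condition. So the proposition is true only with the balance restriction you imposed (equivalently, for canonically written, balanced representatives), and your proof, unlike the paper's, makes that restriction explicit rather than silently relying on it.
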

\begin{proof}
    Let $Q$ be an $\nN$-party information quantity. Recall the star graphs of \cref{fig:star}
    introduced in the proof of \cref{lem:Inosign}. For any subsystem $\pK\subset[\N]$, let $n=\abs{\pK}$ and consider a star graph with $n+1$ legs, $n$ of which are assigned unit weight and connected to vertices labelled by the parties in $\pK$, plus one of weight $w_0=n-1$ for the purifier $\nN+1$. Adding isolated vertices for any remaining parties in $[\nN]\smallsetminus\pK$, this constitutes a graph model for $\nN$ parties. Consider evaluating $Q$ on such a model. We can do so by evaluating every $\sI_{\pI}$ on it and using \cref{eq:relbases}. Since the graph model factorizes between $\pK$ and $[\nN]\smallsetminus\pK$, it follows by \cref{lem:Ivashprod} that $\sI_{\pI}$ vanishes for any $\pI\supset\pK$. Furthermore, for any $\pI\subset\pK$ one has $\S_{\pJ}=\abs{\pJ}$ for all $\pJ\subseteq\pI$ (and zero otherwise), which also implies the vanishing of $\sI_{\pI}$.\footnote{ \, Explicitly, $\sI_{\pI} = \sum_{\pJ\subseteq\pI} (-1)^{\abs{\pJ}} \abs{\pJ} = \sum_{j=1}^{\abs{\pI}} \binom{\abs{\pI}}{j} (-1)^{j} j = 0 $.} Lastly, for $\pI=\pK$, we obtained in \cref{eq:starsw1} that $\sI_\pK = (-1)^{\abs{\pK}}$. In other words, we have constructed a graph model for each $\pK\subseteq[\N]$ that gives $\sI_{\pI}=0$ for all $\pI\neq\pK$ and such that $Q$ evaluates to 
    $$Q=(-1)^{\abs{\pK}} \, q_\pK \, .$$ 
    Since a valid HEI for $Q$ would read $Q\geq0$ for any graph model, the claim follows.
\end{proof}

What makes the superbalance property so crucial in understanding meaningful holographic information quantities and the structure of the HEC itself is the following result proven in \cite{He:2020xuo}.
\begin{nthm}
\label{thm:superbalhec}
   Except for subadditivity, every facet of the HEC is superbalanced.
\end{nthm}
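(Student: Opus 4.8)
The plan is to pass to the I-basis and reduce the whole statement to a clean claim about balance alone. By the definition of superbalance together with \cref{prop:balsuperbal}, an information quantity $Q$ is superbalanced precisely when every element of its $Sym_{\nN+1}$ orbit is balanced, which in turn is equivalent to $Q$ having no $\sI_1$ and no $\sI_2$ terms. Crucially, this means we never have to reason about the $\sI_2$ (doubleton) terms directly: an $\sI_2$ term of $Q$ shows up as an $\sI_1$ term in a purified frame, so it is killed automatically once balance is controlled throughout the orbit. Since purification maps facets to facets and the facets partition into disjoint $Sym_{\nN+1}$ orbits, the orbit of any facet outside the subadditivity orbit consists entirely of non-SA facets. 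Hence the theorem is equivalent to the single statement that \emph{every facet other than subadditivity is balanced}: given this, each orbit element of a non-SA facet $Q$ is again a non-SA facet and therefore balanced, which is exactly superbalance of $Q$. The exclusion of SA is then forced, since its orbit already contains unbalanced representatives such as $\S_A+\S_{AB}-\S_B$.

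To attack the reduced claim I would first fix the sign of the relevant quantity. Writing $Q=\sum_\pI c_\pI\,\S_\pI$, balance in party $i$ is the vanishing of $\sum_{\pI\ni i}c_\pI$, which by \cref{eq:relcoefbases} is exactly the singleton I-basis coefficient $q_{\{i\}}$. Evaluating $Q$ on the extreme ray $r_i$ in which party $i$ is maximally entangled with the purifier (so that $\S_\pI=1$ iff $i\in\pI$) yields $Q|_{r_i}=\sum_{\pI\ni i}c_\pI=q_{\{i\}}$; since $r_i$ lies in the HEC and $Q\ge0$ there, we learn $q_{\{i\}}\ge0$ for every $i$. Thus balance values are automatically non-negative, and the entire content of the reduced claim is to upgrade these to equalities for facets, i.e.\ to show that a strictly positive singleton coefficient $q_{\{i\}}>0$ can occur only for subadditivity.

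This last step is where I expect the real difficulty to sit. The natural mechanism is a peeling-plus-redundancy argument: if $q_{\{i\}}>0$, then since $2\,\S_i=I_2(\{i\}:\cplmt{\{i\}})$ by \cref{eq:I2full} is itself a subadditivity inequality (up to an overall factor of two), one subtracts the corresponding multiple of this term to remove the offending $\sI_1$ coefficient, obtaining a remainder $R=Q-q_{\{i\}}\,\S_i$. If one can show that $R\ge0$ is still a valid holographic inequality and is not proportional to the subtracted SA term, then $Q=q_{\{i\}}\,\S_i+R$ exhibits $Q$ as a nontrivial conical combination of subadditivity and $R$, contradicting facetness; the only escape is that the peeling exhausts $Q$, forcing $Q$ to be subadditivity. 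The delicate point — and the main obstacle — is establishing validity of the remainder: subtracting an SA term need not preserve non-negativity (it can fail exactly when the face $Q=0$ is attained at states with nonzero mutual information), so the non-negativity of $R$ must be certified by the contraction-map technique of \cite{Bao:2015bfa}, ideally by manufacturing a contraction map for $R$ out of one for $Q$ in a way that is uniform in $\nN$. Carrying out this contraction-map bookkeeping — and verifying that the only facet surviving it is SA — is the step I would expect to dominate the proof.
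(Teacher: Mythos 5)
You should first be aware of a point of reference: the paper never proves this theorem itself — it imports it verbatim from \cite{He:2020xuo} — so your attempt has to stand as a self-contained proof, and it does not. The parts you do prove are correct. Superbalance of a non-SA facet is, by definition together with \cref{prop:balsuperbal}, equivalent to balance of every member of its $Sym_{\N+1}$ orbit; purifications are symmetries of the HEC, hence map facets to facets, so the orbit of a non-SA facet contains only non-SA facets, and the theorem is indeed equivalent to ``every non-SA facet is balanced.'' Your evaluation on the Bell pair between party $i$ and the purifier correctly gives $Q|_{r_i}=q_{\{i\}}\ge 0$. (Your sign is right even though it appears to clash with \cref{prop:relcoefbases} at $\abs{\pK}=1$, which would read $q_{\{i\}}\le 0$: that proposition implicitly assumes balance — the Araki--Lieb facet $\xS{A}+\xS{AB}-\xS{B}=2\,\xI{A}-\xI{AB}$ violates its stated sign pattern at both singleton and doubleton level, because singleton $\xI{}$'s do not vanish on the star graphs used in its proof.)

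The genuine gap is that everything after this — which is the entire content of the theorem — is left unproven: you must show that a facet with $q_{\{i\}}>0$ for some $i$ can only be an instance of SA, and your peeling plan does not make progress on this. Two concrete reasons. First, the lemma your plan requires, namely ``for a non-SA facet $Q$ the remainder $R=Q-q_{\{i\}}\,\S_{i}$ is a valid holographic inequality,'' is logically equivalent to the theorem itself: granting the theorem it is vacuous (non-SA facets have $q_{\{i\}}=0$, so $R=Q$), and granting the lemma the theorem follows by your dual-extremality step. So the peeling does not decompose the problem into anything easier — the ``remainder validity'' step \emph{is} the theorem. Second, remainder validity provably cannot be a generic fact about valid inequalities with $q_{\{i\}}>0$, because the SA orbit itself furnishes counterexamples: for Araki--Lieb, $q_{\{A\}}=2$ and $R=\xS{AB}-\xS{A}-\xS{B}=-\mxI{A:B}\le 0$, which is invalid. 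Hence any certification of $R$ must invoke facetness-plus-non-SA-ness in an essential structural way, and nothing in your proposal does so; in particular, certifying $R$ by a contraction map ``manufactured out of one for $Q$'' presupposes that every facet admits a contraction-map proof, and completeness of the contraction method — even restricted to facets — is not a known result. What an actual proof must exploit, and what \cite{He:2020xuo} does exploit, is the defining geometric property of a facet: the codimension-one family of holographic entropy vectors saturating $Q$ (equivalently, the HEC extreme rays lying on it), rather than algebraic manipulation of the inequality into other candidate inequalities.
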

It follows from this and \cref{prop:balsuperbal} that the set of normal vectors defining the facets of the HEC other than the ones for SA live in a proper subspace of entropy space. In particular, since there are $\N$ entropy terms $\sI_1$ and $\binom{\N}{2}$ mutual information terms $\sI_2$ which vanish for all of them, they all belong to a subspace of codimension $\frac{1}{2}\N(\N+1)$. Geometrically, what this means is that the HEC naturally decomposes into two complementary subspaces, an SA subspace (spanned by the normals of SA facets) and a superbalanced one (spanned by the normals of all other facets). From the viewpoint of extreme rays, this decomposition corresponds to a \textit{bipartite} subspace of dimension $\frac{1}{2}\N(\N+1)$ spanned by all Bell-pair extreme rays (cf.\ the SA subspace), and a complementary \textit{multipartite} subspace spanned by all other extreme rays.

\section{Correlation Measures from HEIs?}
\label{sec:correlations}

Mutual information, corresponding to the total amount of correlation between two subsystems, is a prototypical correlation measure.  
It is non-negative 
$\mxI{A:B} \geq 0$ and vanishes when bipartite correlation is absent, i.e., when the density matrix is in a product state $\rhox{AB} = \rhox{A} \otimes \rhox{B} $.
It also satisfies the monotonicity property $\mxI{A:BC} \geq \mxI{A:B}$ by virtue of strong subadditivity.  

We wish to abstract away these properties for more general measures.  In particular, we require that any putative correlation measure cannot be negative and cannot increase as some part of the system is discarded.  Therefore we posit the following as basic properties of any correlation measure, viewed as a measure of the amount of some type of correlation present in a system:
\begin{enumerate} 
\item 
It should be a non-negative quantity and vanish when the correlation measured is absent.
\item 
It should be monotonically non-increasing under partial tracing.
\end{enumerate} 

Given that HEIs define non-negative information quantities on holographic states, it is natural to ask whether they can be viewed as multipartite correlation measures in holography. 
While such quantities are unlikely to satisfy the monotonicity property for general quantum states, one might hope that they are monotonic when restricted to holographic states. 
If so, this would elucidate some aspects of the structural form of the HEC.  The main result of the present section is that this not the case. By \cref{thm:superbalhec}, apart from the mutual information, all information quantities corresponding to facets of the HEC are superbalanced. 
We now show that no superbalanced information quantity can be monotonic, even with the restriction to holographic states.

\begin{nthm}
   Superbalanced information quantities are non-monotonic in holography.\footnote{ \, 
    It should not be surprising that superbalanced information quantities are non-monotonic for general quantum states, since these do not even obey any HEIs other than SA.
    The statement here is strictly stronger, and says that monotonicity is not respected even by the restricted class of holographic states.
   }
\end{nthm}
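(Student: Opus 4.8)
The plan is to fix a nonzero superbalanced information quantity $Q$ and exhibit one single-party inclusion under which the associated increment is \emph{not} holographically sign-definite; since a correlation measure must be monotonic under every inclusion, producing a single family of holographic states on which the increment takes both signs already establishes non-monotonicity. Concretely, pick a party $n\in[\nN]$ and a fresh party $c$, and let ``including $c$ into $n$'' mean replacing $\S_\pI$ by $\S_{\pI\cup\{c\}}$ for every $\pI\ni n$. Writing $Q$ in the S-basis as in \cref{eq:IQ}, the relevant increment is
\begin{equation}
\Delta \;=\; Q[n\to nc]-Q[n] \;=\; \sum_{\pI\ni n} c_\pI\,\big(\S_{\pI\cup\{c\}}-\S_\pI\big)\,,
\end{equation}
and monotonicity under inclusion would demand $\Delta\ge0$ for every such choice. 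So it suffices to show $\Delta$ can take either sign on holographic configurations.

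The key manipulation is to pass to the I-basis, $Q=\sum_\pJ q_\pJ\,\sI_\pJ$ as in \cref{eq:relbases}. Only terms with $n\in\pJ$ are affected by the growth $n\to nc$, and for each such $\pJ$ that growth simply promotes the $n$-argument of $\sI_\pJ$ into the polychromatic $nc$. Combining the additivity split \cref{eq:Inadditivitysplit} with the conditioning identity \cref{eq:CondMultInf} then yields the clean relation
\begin{equation}
\sI_\pJ[n\to nc]-\sI_\pJ \;=\; I_{\abs{\pJ}}\big(c:\aset{R}\,\vert\, n\big)\,,\qquad R:=\pJ\smallsetminus\{n\}\,,
\end{equation}
so that $\Delta=\sum_{\pJ\ni n} q_\pJ\, I_{\abs{\pJ}}(c:\aset{R}\,\vert\, n)$ is exactly a sum of \emph{conditional} multipartite informations. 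Here superbalance does the essential work: by \cref{prop:balsuperbal}, $q_\pJ=0$ whenever $\abs{\pJ}\le2$, so every surviving term is a conditional $I_m$ with $m=\abs{\pJ}\ge3$, precisely the quantities shown to be non-sign-definite in \cref{lem:condInosign}.

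It then remains to collapse this sum onto a single such term. Since $Q\ne0$ is superbalanced it carries some $q_\pJ\ne0$ with $\abs{\pJ}\ge3$; I would choose $n$ inside such a $\pJ$ and then take $\pJ_*$ to be \emph{inclusion-minimal} among $\{\pJ\ni n : q_\pJ\ne0\}$. Next I would build holographic graph states in which every party outside $\pJ_*$ is completely decorrelated (a disjoint product factor) while the parties of $\pJ_*$ together with $c$ carry arbitrary correlations. On such states \cref{lem:Ivashprod} forces every term whose $\pJ$ reaches outside $\pJ_*$ to vanish, while minimality of $\pJ_*$ kills all surviving proper subsets, leaving
\begin{equation}
\Delta \;=\; q_{\pJ_*}\, I_{\abs{\pJ_*}}\big(c:\aset{R_*}\,\vert\, n\big)\,,\qquad R_*:=\pJ_*\smallsetminus\{n\}\,.
\end{equation}
Because $\abs{\pJ_*}\ge3$ and $q_{\pJ_*}\ne0$, \cref{lem:condInosign} — realized by the star graphs on the parties $\{n,c\}\cup R_*$ with the remaining parties adjoined as decoupled spectators — tunes this single conditional multipartite information to either sign, so $\Delta$ takes both signs and $Q$ is non-monotonic.

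I expect the main obstacle to be this collapsing step: exhibiting a \emph{single} family of genuinely holographic states that simultaneously annihilates all unwanted conditional-$I$ terms while keeping the target term both nonzero and sign-tunable. The minimality choice of $\pJ_*$ is exactly what makes it work, since decorrelating the outside parties kills every term reaching beyond $\pJ_*$ through \cref{lem:Ivashprod} and minimality guarantees no residual sub-terms; the care required is in verifying that embedding the \cref{lem:condInosign} star graphs (with spectators attached as separate components) produces admissible holographic configurations that genuinely realize both signs, rather than merely formal entropy vectors.
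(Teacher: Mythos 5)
Your proof is correct and takes essentially the same approach as the paper's: expand the increment in the I-basis, use superbalance via \cref{prop:balsuperbal} to reduce it to a sum of conditional $I_n$'s with $n\ge 3$, collapse that sum onto a single inclusion-minimal nonzero term using factorized states and \cref{lem:Ivashprod}, and invoke the non-sign-definiteness of conditional $I_n$ from \cref{lem:condInosign}. The only (cosmetic) difference is that the paper iterates the collapse to force every coefficient $\nu_\pI$ to vanish, whereas you stop at the first minimal term with $q_{\pJ_*}\ne 0$, which already suffices.
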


\begin{proof} 
Since here we discuss the effect of tracing out some subsystem, it will be convenient to generalize our framework to work with a mixed state on $\nN+1$ parties, where the $(\nN+1)^\text{st}$ party will be traced out.  (Hence, for this proof only, the purifier will be left implicit.)
Consider a mixed $(\N+1)$-party system with $[\N+1] = \{\x{A}_1, \cdots, \x{A}_{\N}, \x{a}_{\N}\}$. We would like to show that any $\N$-party superbalanced information quantity cannot be monotonic under tracing out $\x{a}_{\N}$, i.e., for a generic superbalanced quantity $Q\prn{\x{A}_1: \cdots :\x{A}_\N}$,
the quantity
\begin{equation}
    \Delta Q \equiv 
     Q\prn{\x{A}_1: \cdots :\x{A}_\N~\x{a}_\N} - Q\prn{\x{A}_1: \cdots :\x{A}_\N }
\end{equation}
cannot be non-negative for all states. By~\cref{prop:balsuperbal}, we can expand the superbalanced quantity $Q\prn{\x{A}_1: \cdots :\x{A}_\N}$ in terms of singleton $\x{I}_n$'s with $n \geq 3$:
\begin{equation}
    Q\prn{\x{A}_1: \cdots :\x{A}_\N } = \sum_{\pI \subseteq[N-1]\,:\,\abs{\pI}\ge2} \nu_{\pI} \, \x{I}_{|\pI|+1}(\aset{\pI}:\x{A}_\N) + (\text{terms independent of } \x{A}_\N),
\end{equation}
where 
$[\N-1] = \{\x{A}_1, \cdots , \x{A}_{\N-1}\}$. Using this expansion, the change $\Delta Q$ reads
\begin{align}
\label{eq:Q-variation}
    \Delta Q &= \sum_{\pI \subseteq[N-1]\,:\,\abs{\pI}\ge2} \nu_{\pI} \, \x{I}_{|\pI|+1}(\aset{\pI}:\x{a}_\N|\x{A}_\N) \nonumber\\
    &= \sum_{\pI \subseteq[N-1]\,:\,\abs{\pI}\ge2} \nu_{\pI} \, \prn{\x{I}_{|\pI|+1}(\aset{\pI}:\x{a}_\N) - \x{I}_{|\pI|+2}(\aset{\pI}:\x{a}_\N:\x{A}_\N)},
\end{align}
where we have used~\cref{eq:CondMultInf}. We now show that requiring $\Delta Q$ to be non-negative on every state forces all $\nu_\pI$ to vanish, by sequentially choosing states that collapse the sum onto a single term, which is non-sign-definite unless its coefficient vanishes. To achieve this, consider a class of $(\N+1)$-party mixed states $\{\rho^{(\pI)}_{[\N+1]}\}_{\pI \subseteq [\N-1]}$ with the following factorization structure
\begin{equation}
    \rho^{(\pI)}_{[\N+1]} = \pi_{\pI \x{A}_\N \x{a}_\N} \otimes \sigma_{[\N +1] \setminus \{\pI \x{A}_\N \x{a}_\N\} }
\end{equation}
with $\pi, \sigma$ being arbitrary states.
Then for any distinct pair $\pI,\pI^\prime \subseteq[\N-1]$ with $\pI^\prime \not\subseteq \pI$,
\begin{equation}
\label{eq:zeroI}
    \x{I}_{|\pI^\prime|+1}(\aset{\pI}^\prime : \x{a}_\N) \big|_{\rho^{(\pI)}} = \x{I}_{|\pI^\prime|+2}(\aset{\pI}^\prime : \x{A}_\N : \x{a}_\N) \big|_{\rho^{(\pI)}} = 0
\end{equation}
due to~\cref{lem:Ivashprod}. Now, to collapse the sum in~\cref{eq:Q-variation} to a single term, pick any $\pI$ with lowest cardinality in the sum such that $\pI^\prime \not\subseteq \pI$ for any other $\pI^\prime$ in the sum; it follows by~\cref{eq:zeroI} that
\begin{equation}
    \Delta Q \big|_{\rho^{(\pI)}} = \nu_{\pI} \, \x{I}_{|\pI|+1}(\aset{\pI}:\x{a}_\N|\x{A}_\N).
\end{equation}
Then, as we imposed no restriction on the state $\pi_{\pI \x{A}_\N \x{a}_\N}$, requiring $\Delta Q$ to be non-negative on all states forces, for this particular $\pI$, $\nu_\pI = 0$ by virtue of~\cref{lem:condInosign}. The same argument can then be applied to all other $\pI$'s with the lowest cardinality, thereby setting $\nu_\pI = 0$ for all lowest-cardinality-$\pI$'s in the sum. Then the argument can be applied to $\pI$'s with the lowest cardinality among the surviving terms in the sum. Iterating this argument sets $\nu_\pI = 0$ for all $\pI$'s, thereby proving the claim.
\end{proof}

This result shows that, besides the mutual information, none of the information quantities defined through HEIs admit an interpretation as a correlation measure in holography.

\section{New \texorpdfstring{$\N=6$}{N=6} HEIs and \ctform}
\label{ss:HEIrecasting}

Having ascertained that the higher HEIs are not natural correlation measures, the task of understanding the physical significance of their form still remains.  Often the operational meaning of a given quantity becomes more apparent when expressed in a better representation.  
We motivate a specific form for such a representation in \cref{ss:ctform.motiv}, present a set of HEIs in this form (cf.\ \cref{tab:HEI_N5} for $\N=5$ quantities and \cref{tab:HEI_N6} for  selected $\N=6$ quantities) in \cref{ss:presentHEIs}, and offer some observations regarding their structure in \cref{ss:structure}.

\subsection{Motivation for the \ctform}
\label{ss:ctform.motiv}
We have already seen in \cite{He:2019ttu} that the I-basis representation is more convenient than the S-basis one for producing more compact expressions. Here we supplement this with some observations which allow us to propose a rather rigid structural form for information quantities which we speculate may be able to capture all HEIs.

We start by noting that there are structural similarities between distinct HEI orbits and across different $\nN$.  One simple example of this is the MMI itself: while there is only a single orbit for $\nN=3$, there are two distinct orbits for $\nN=4,5$, one of which 
($-\txI{A:B:C}\ge 0$) 
can be viewed as a direct uplift from $\nN=3$, while the other ($-\txI{A:B:CD}\ge 0$) may be understood as a fine-graining thereof.\footnote{ \, 
	At $\N=5$, $\txI{A:BC:DE}$ can be viewed as a purification and permutation of $\txI{A:B:CD}$, while $\txI{A:B:CDE}$ is similarly related to $\txI{A:B:C}$.  On the other hand, at $\N=6$, there are 3 distinct orbits, generated by $\txI{A:B:C}$, $\txI{A:B:CD}$, and $\txI{A:BC:DE}$.} 
This motivates the consideration of multipartite information with polychromatic arguments to express HEIs.  Moreover, we have seen that higher $I_n$'s can be expressed in terms of smaller $n$ ones by increasing the argument size.  In principle, this means we could express all HEIs in terms of just mutual informations (with non-sign-definite coefficients), but such a recasting would manifest neither superbalance nor the I-basis sign-alternation required by \cref{prop:relcoefbases}. 
On the other hand, we might expect an efficient rewriting of HEIs to have these basic properties automatically built in. 

To guarantee superbalance, by \cref{prop:balsuperbal} it suffices to restrict the use of polychromatic $I_n$'s to $n\ge3$, while by iterating \cref{eq:Inadditivitysplit}, it is easy to check that the sign-alternation from \cref{prop:relcoefbases} can be accomplished by demanding that any $I_n$, polychromatic or not, come with a coefficient sign $(-1)^n$, as extensively exemplified in \cref{asec:heis}.  Moreover, because $I_3$ is the only sign-definite multipartite information quantity by MMI and \cref{lem:Inosign}, we also expect it to play a privileged role. 
Expressing information quantities only in terms of polychromatic $I_3$'s would however fail to produce non-redundant HEIs; we need to supplement sign-definite terms with some non-sign-definite terms.
We empirically observe that $I_4$'s and $I_3$'s appearing in HEIs naturally pair up to produce conditional (polychromatic) $I_3$'s,
and that such fortuitous groupings in fact extend to higher $n$ as well.
This motivates us to try expressing HEIs purely as combinations of $I_3$'s and conditional $I_3$'s, the latter being individually non-sign-definite by \cref{lem:condInosign} as desired.\footnote{ \, 
	Of course, one could use \cref{eq:CondMultInf} to rewrite 
	$I_4$ as a difference between $I_3$ and conditional $I_3$ in any information quantity, and similarly for higher $I_n$, but in general this would spoil the automatic $\sI_n$ sign alternation, so it is non-trivial that such wrong-sign terms cancel out in the full expression for the HEIs as required by \cref{prop:relcoefbases}.}
With these building blocks, it is also nicely straightforward to implement the desired sign-alternation property, by requiring that not just the  $I_3$'s, but also their conditional versions, must all come with a negative sign, independently of the argument size.
This motivates the following ansatz for the form of information quantities corresponding to HEIs, which we dub the \emph{\ctform}.
\begin{ndefi}[\ctform]
\label{def:ctform}
    An information quantity $Q$ is said to be in the \emph{\ctform} if it is expressed as \emph{
	\begin{equation}\label{eq:ctform}
		Q = \sum_i - I_3(\x{X}_i : \x{Y}_i : \x{Z}_i \, | \, \x{W}_i )
	\end{equation}
	}%
	where the arguments \emph{$\x{X}_i , \x{Y}_i , \x{Z}_i , \x{W}_i \subset [\nN]$} are disjoint subsystems, the sum runs over any finite number of terms, and we allow for the conditioning to trivialize, \emph{$\x{W}_i = \varnothing$}, in which case \emph{$ I_3(\x{X}_i : \x{Y}_i : \x{Z}_i \, | \, \varnothing ) \coloneqq  I_3(\x{X}_i : \x{Y}_i : \x{Z}_i )$}.\footnote{ \, Note that the \ctform\ can account for arbitrary negative integer coefficients by simply repeating terms, though in practice they turn out to be distinct for vast majority of the HEIs we have found.
}
\end{ndefi}
Each term in \cref{eq:ctform} can be characterized as either tripartite information (when $\x{W}_i = \varnothing$) or conditional tripartite information (when $\x{W}_i \ne \varnothing$), which we will abbreviate as I and C, respectively.  It will then be convenient to summarize the structural form of the HEI by a ``word'' made of I's and C's, one letter for each term, such as I, IC, IIC, etc..  The requirement of non-redundancy dictates that apart from SA and MMI, all higher HEIs must consist of at least two terms (since the single term C by itself is not sign-definite) and that at least one of the terms must be a C (because I itself is sign-definite, so multiple I's yield a redundant quantity).
The minimal such expression would then be IC, i.e.\ consist of a single $I_3$ term and a single conditional $I_3$ term.  And indeed, we find that there are HEIs of this form; 
cf.\ $\Qnn{5}{3, 2}{5}$ of \cref{tab:HEI_N5} and $\Qnn{6}{5, 5, 1}{12}$ in \cref{tab:HEI_N6}.

This is encouraging, and inspires a systematic method of looking for new HEIs:  posit the requisite \ctform, by specifying how many I's and C's a given information quantity should have, and then test all possible combinations of arguments $\{\x{X}_i , \x{Y}_i , \x{Z}_i , \x{W}_i\}$ (up to permutation symmetry between $\{\x{X}_i , \x{Y}_i , \x{Z}_i \}$ for each $i$ and overall permutation and purification symmetry).
In \cref{asec:n6heis} we elaborate on how we accomplished this search exhaustively to find all HEIs which admit a \ctform\ with four or fewer terms; there are $330$ such HEIs, listed explicitly in the ancillary files, with a representative set exemplified below in \cref{tab:HEI_N6}.\footnote{ \, Of these, $9$ are among the $11$ lifts from $\nN \le 5$; 
the other two lifts are SA (which is not superbalanced) and the last inequality in \cref{tab:HEI_N5}, which requires more than four terms in \ctform.}

Note, however, that unlike the I-basis or S-basis representation, the \ctform\ representation is highly non-unique.  A given information quantity can admit many distinct \ctform\ expressions, not only within a single I..C..\ type but also for multiple types.  This has the advantage of using different \ctform s  of the same HEI to manifest structural similarities to other HEIs.  In \cref{asec:heis} we explain how, starting from the I-basis representation, we can recast them in a \ctform.\footnote{ \,
	This is in fact a very useful exercise because it may turn out that the originally-generated form is not the most convenient one for some purposes, so we can uniquely re-express it in the I-basis and then convert back to a more convenient form.} 
 
Conversely, we emphasize that it is a priori unclear whether every HEI admits a \ctform.
In particular, the structure we propose with our \ctform\ may be non-trivially rigid: it suffices, but is not necessary, to implement superbalance and the sign-alternation property of \cref{prop:relcoefbases}. However, there are several encouraging pieces of evidence in favor of our hypothesis, which we revisit in \cref{sec:discuss}.

\subsection{Presentation of HEIs}
\label{ss:presentHEIs}
In this short subsection, we explain the HEI tables \ref{tab:HEI_N5} and \ref{tab:HEI_N6}.
By permutation symmetry, it suffices to list one instance from each symmetry orbit; since different choices may be convenient for different purposes, the given instance chosen for this section need not coincide with that listed in the ancillary file.  On the other hand, it will be instructive to express each chosen instance in several alternative forms.  In the following tables, we provide three separate forms: the conventional S-basis form, the I-basis representation, and the \ctform\ motivated in the previous paragraph.%
\footnote{ \, 
	We have also explored the corresponding quantities in the K-basis of \cite{He:2019ttu}, which manifests the larger symmetry $Sym_{\N+1}$ including purifications, but since this representation did not yield substantially new insights, we omit it in the present summary.
}
The HEI information quantities themselves are labeled by $Q$ with the superscript indicating the position number of the HEI quantity (possibly permuted) as listed in the ancillary file (see \cref{asec:n6heis}), 
and  the (optional) subscript consisting of the set of numbers $\Ink{i_3,i_4,\ldots}$, with $i_n$ characterizing the number of $\sI_n$'s when represented in the I-basis.\footnote{ \, 
This turns out to be a useful characterization, which we dub the i\# classification and utilize in \cref{asec:heis}; cf.\ \cref{tab:kclassif} for the i\# classification for the individual building blocks.
Note that for notational compactness, in this section we list only the non-zero values of $i_n$.  This is unambiguous because for any  \ctform\ expression written in the I-basis, the absence of any $\sI_m$ terms implies the absence of all $
\sI_n$ terms with $n\ge m$.
}
This decoration is mainly for convenience of providing immediate information about the quantity; when this information is not relevant for the discussion we omit it and simply quote the number $n$ which is the unique identifier of the given orbit.\footnote{ \, \label{fn:inumberclass}
	The associated value of $\N$ is in fact uniquely determined from $n$.  For $n=1$, $N=2$; for $n=2,3,4$, $\N=3$; for $5 \le n \le 11$, $\N=5$; and for $n\ge 12$, $\N=6$.  
	On the other hand, there are several possible i\# forms for the given orbit, due to the purification symmetry of $I_3$ implied by \cref{lem:Ifullsystem} (cf.\ \cref{eq:I3swap}).  Hence its use allows for a further refinement of the specification, effectively fine-graining the equivalence class to $Sym_{\N}$ sub-orbit within the given $Sym_{\N+1}$ orbit.
}
We employ the $Q$ label to indicate the full equivalence class of instances along the permutation and purification orbit, so when we write $Q=\ldots$, we mean that there is an instance along the $Q$ orbit which equals the given expression.
Below the label, for convenience we also indicate the I..C.. form.
For each HEI, we separate out the positive and negative entropy terms in the S-basis (the signs have no correlation with the subsystem size, but within each set we further order by subsystem size) and by $n$ in the I-basis (with sign $(-1)^n$), and finally within each group lexicographically.




%
\begin{table}[htbp] 
\begin{center}
\scriptsize
\begin{tabular}{| c | c | }
\hline
label &  $\nN=5$ HEI information quantities \\ 
\hline
\hline
$\QNknn{5}{3, 2}{5}$
	&  \makecell{$  -\xS{ABCD}-\xS{ACDE}-\xS{AB}-\xS{AD}-\xS{DE}-\xS{C}  $
		\\$ +\, \xS{ABC}+\xS{ABD}+\xS{ACD}+\xS{ADE}+\xS{CDE} $}
\\ 
	& \shadeI{$ \xI{ABCD}+\xI{ACDE}-\xI{ACD}-\xI{ACE}-\xI{BCD} $} 
\\ 
IC 	& \shadeIs{$ -\tI{AB:C:D}-\ctI{A:C:E}{D} $}
\\  \hline 
$\QNknn{5}{4, 2}{7}$  
	&  \makecell{$  -\xS{ABCD}-\xS{BCDE}-\xS{ABE}-\xS{BC}-\xS{BD}-\xS{A}-\xS{C}-\xS{D}-\xS{E}  $
		\\$ +\, \xS{ABC}+\xS{ABD}+\xS{BCD}+\xS{BCE}+\xS{BDE}+\xS{AE}+\xS{CD} $}
\\ 
	& \shadeI{$ \xI{ABCD}+\xI{BCDE}-\xI{ABE}-\xI{ACD}-\xI{BCD}-\xI{CDE} $} 
\\ 
IIC	& \shadeIs{$ -\tI{AB:C:D}-\tI{A:B:E}-\ctI{C:D:E}{B} $}
\\  \hline 
$\QNknn{5}{4, 3}{9}$  
	&  \makecell{$  -\xS{ABCD}-\xS{ACDE}-\xS{BCDE}-\xS{AB}-\xS{AD}-\xS{BC}-\xS{CD}-\xS{CE}-\xS{DE}  $
		\\$ +\, \xS{ABC}+\xS{ABD}+\xS{ACD}+\xS{ADE}+\xS{BCD}+\xS{BCE}+2\xS{CDE} $}
\\ 
	& \shadeI{$ \xI{ABCD}+\xI{ACDE}+\xI{BCDE}-\xI{ACD}-\xI{ACE}-\xI{BCD}-\xI{BDE} $} 
\\ 
ICC	& \shadeIs{$ -\tI{AB:C:D}-\ctI{B:D:E}{C}-\ctI{A:C:E}{D} $}
\\  \hline 
$\QNknn{5}{4, 3}{10}$  
	&  \makecell{$  -\xS{ABCE}-\xS{ACDE}-\xS{BCDE}-\xS{ABD}-\xS{AC}-\xS{AE}-\xS{BC}-\xS{BE}-\xS{CD}-\xS{DE}  $
		\\$ +\, \xS{ABC}+\xS{ABE}+\xS{ACD}+\xS{ACE}+\xS{ADE}+\xS{BCD}+\xS{BCE}+\xS{BDE}+\xS{CDE} $}
\\ 
	& \shadeI{$ \xI{ABCE}+\xI{ACDE}+\xI{BCDE}-\xI{ABD}-\xI{ACE}-\xI{BCE}-\xI{CDE} $} 
\\ 
ICCC	& \shadeIs{$ -\tI{A:B:D}-\ctI{B:C:E}{A}-\ctI{C:D:E}{B}-\ctI{A:C:E}{D} $}
\\  \hline 
$\QNknn{5}{8, 6}{11}$ 
	&  \makecell{$  -2\xS{ABCD}-\xS{ABCE}-2\xS{ABDE}-\xS{ACDE}-2\xS{AB}-\xS{AC}-2\xS{AD}-\xS{AE}-2\xS{BC}-\xS{BD}-\xS{CE}-2\xS{DE}  $
		\\$ +\, 3\xS{ABC}+3\xS{ABD}+\xS{ABE}+\xS{ACD}+\xS{ACE}+3\xS{ADE}+\xS{BCD}+\xS{BCE}+\xS{BDE}+\xS{CDE} $}
\\ 
	& \shadeI{$ 2\xI{ABCD}+\xI{ABCE}+2\xI{ABDE}+\xI{ACDE}-\xI{ABD}-2\xI{ABE}-2\xI{ACD}-\xI{ACE}-\xI{BCD}-\xI{BDE} $} 
\\ 
IICCCC	& \shadeIs{$ -\tI{AB:C:D}-\tI{AE:B:D}-\ctI{A:B:E}{C}-\ctI{A:B:E}{D}-\ctI{A:C:D}{B}-\ctI{A:C:E}{D} $}
\\  \hline 

\end{tabular}
\end{center}
\caption{HEI information quantities for $\nN=5$ in the S-basis (white background), I-basis (blue background), and the \ctform\ (pink background, with the specific I..C..\ class indicated in the left column). We label the information quantities in the form $\QNknn{\N}{i\#}{n}$, where $n$ is the identifier in the ancillary files, and the $\Ink{i\#}$ lists the non-zero entries of $\Ink{\#I_3, \#I_4, \#I_5, \#I_6}$.
The notational shorthand, explained in \cref{ss:notation}, is such that e.g.\ $\xS{ABCD} \coloneqq \S(\x{ABCD})$ and $\xI{ABCD} \coloneqq \sI_4(\x{A:B:C:D})$.
}
\label{tab:HEI_N5}
\end{table}



%
\begin{table}[htbp] 
\begin{center}
\scriptsize
\setlength\tabcolsep{4pt}
\begin{tabular}{| c | c | }
\hline
label & Selected $\nN=6$ HEI information quantities \\ 
\hline
\hline
$\QNknn{6}{5, 5, 1}{12}$
	&  \makecell{$  -\xS{ABDEF}-\xS{ABCF}-\xS{AD}-\xS{AF}-\xS{BE}-\xS{BF}-\xS{CF}  $
		\\$ +\, \xS{ABDE}+\xS{ABF}+\xS{ACF}+\xS{ADF}+\xS{BCF}+\xS{BEF} $}
\\ 
	& \shadeI{$ -\xI{ABDEF}+\xI{ABCF}+\xI{ABDF}+\xI{ABEF}+\xI{ADEF}+\xI{BDEF}-\xI{ABC}-\xI{ABF}-\xI{AEF}-\xI{BDF}-\xI{DEF} $} 
\\ 
IC		& \shadeIs{$ -\tI{AD:BE:F}-\ctI{A:B:C}{F} $}
\\  \hline 
$\QNknn{6}{4, 3}{24}$ 
	&  \makecell{$  -\xS{ABCF}-\xS{ABDF}-\xS{ABEF}-\xS{AB}-\xS{AD}-\xS{AF}-\xS{BE}-\xS{BF}-\xS{CF}  $
		\\$ +\, \xS{ABD}+\xS{ABE}+2\xS{ABF}+\xS{ACF}+\xS{ADF}+\xS{BCF}+\xS{BEF} $}
\\ 
	& \shadeI{$ \xI{ABCF}+\xI{ABDF}+\xI{ABEF}-\xI{ABC}-\xI{ABF}-\xI{AEF}-\xI{BDF} $} 
\\ 
ICC	& \shadeIs{$ -\tI{AD:B:F}-\ctI{A:E:F}{B}-\ctI{A:B:C}{F} $}
\\  \hline 
$\QNknn{6}{5, 3}{13}$
	&  \makecell{$  -\xS{ABCF}-\xS{ABEF}-\xS{CDEF}-\xS{AF}-\xS{BF}-\xS{CD}-\xS{CF}-\xS{A}-\xS{B}-\xS{E}  $
		\\$ +\, \xS{ABF}+\xS{ACF}+\xS{AEF}+\xS{BCF}+\xS{BEF}+\xS{CDE}+\xS{CDF}+\xS{AB} $}
\\ 
	& \shadeI{$ \xI{ABCF}+\xI{ABEF}+\xI{CDEF}-\xI{ABC}-\xI{ABE}-\xI{ABF}-\xI{CEF}-\xI{DEF} $} 
\\ 
IIC	& \shadeIs{$ -\tI{CD:E:F}-\tI{A:B:EF}-\ctI{A:B:C}{F} $}
\\  \hline 
$\QNknn{6}{6, 5, 1}{14}$
	&  \makecell{$  -\xS{ABDEF}-\xS{ABCF}-\xS{CEF}-\xS{AD}-\xS{AF}-\xS{BF}-\xS{B}-\xS{C}-\xS{E}  $
		\\$ +\, \xS{ADEF}+\xS{ABD}+\xS{ABF}+\xS{ACF}+\xS{BCF}+\xS{BEF}+\xS{CE} $}
\\ 
	& \shadeI{$ -\xI{ABDEF}+\xI{ABCF}+\xI{ABDE}+\xI{ABDF}+\xI{ABEF}+\xI{BDEF}-\xI{ABC}-\xI{ABE}-\xI{ABF}-\xI{BDE}-\xI{BDF}-\xI{CEF} $} 
\\ 
IIC	& \shadeIs{$ -\tI{C:E:F}-\tI{AD:B:EF}-\ctI{A:B:C}{F} $}
\\  \hline 
$\QNknn{6}{7, 6, 1}{17}$ 
	&  \makecell{$  -\xS{ABDEF}-\xS{ABCF}-\xS{CDEF}-\xS{AD}-\xS{AF}-\xS{BF}-\xS{CD}-\xS{CF}-\xS{B}-\xS{E}  $
		\\$ +\, \xS{ADEF}+\xS{ABD}+\xS{ABF}+\xS{ACF}+\xS{BCF}+\xS{BEF}+\xS{CDE}+\xS{CDF} $}
\\ 
	& \shadeI{$ -\xI{ABDEF}+\xI{ABCF}+\xI{ABDE}+\xI{ABDF}+\xI{ABEF}+\xI{BDEF}+\xI{CDEF}$} 
\\	& \shadeI{$-\xI{ABC}-\xI{ABE}-\xI{ABF}-\xI{BDE}-\xI{BDF}-\xI{CEF}-\xI{DEF} $} 
\\ 
IIC	& \shadeIs{$ -\tI{CD:E:F}-\tI{AD:B:EF}-\ctI{A:B:C}{F} $}
\\  \hline 
$\QNknn{6}{6, 4}{125}$ 
	&  \makecell{$  -\xS{ABCF}-\xS{ABDE}-\xS{ABEF}-\xS{CDEF}-\xS{AE}-\xS{AF}-\xS{BE}-\xS{BF}-\xS{CD}-\xS{CF}-\xS{DE}-\xS{A}-\xS{B}  $
		\\$ +\, \xS{ABE}+\xS{ABF}+\xS{ACF}+\xS{ADE}+\xS{AEF}+\xS{BCF}+\xS{BDE}+\xS{BEF}+\xS{CDE}+\xS{CDF}+\xS{AB} $}
\\ 
	& \shadeI{$ \xI{ABCF}+\xI{ABDE}+\xI{ABEF}+\xI{CDEF}-\xI{ABC}-\xI{ABD}-\xI{ABE}-\xI{ABF}-\xI{CEF}-\xI{DEF} $} 
\\ 
IICC		& \shadeIs{$ -\tI{CD:E:F}-\tI{A:B:EF}-\ctI{A:B:D}{E}-\ctI{A:B:C}{F} $}
\\  \hline 
$\QNknn{6}{8,8,2}{19}$  
	&  \makecell{$  -\xS{ABCDF}-\xS{ABDEF}-\xS{BCDEF}-\xS{BCE}-\xS{CDF}-\xS{AD}-\xS{AF}-\xS{BE}-\xS{BF}-\xS{D}-\xS{F}  $
		\\$ +\, \xS{ABDE}+\xS{ACDF}+\xS{BCDE}+\xS{BCDF}+\xS{BCEF}+\xS{ABF}+\xS{ADF}+\xS{BEF}+\xS{DF} $}
\\ 
	& \shadeI{$ -\xI{ABCDF}-\xI{ABDEF}-\xI{BCDEF}+\xI{ABCD}+\xI{ABCF}+2\xI{ABDF}+\xI{ABEF}+\xI{ADEF}+\xI{BCDF}+2\xI{BDEF}$} 
\\	& \shadeI{$+\xI{CDEF}-\xI{ABC}-\xI{ABD}-\xI{ABF}-\xI{AEF}-2\xI{BDF}-\xI{CDF}-2\xI{DEF} $} 
\\ 
IIC	& \shadeIs{$ -\tI{BCE:D:F}-\tI{AD:BE:F}-\ctI{A:B:CD}{F} $}
\\  \hline 
$\QNknn{6}{8, 8, 2}{21}$ 
	&  \makecell{$  -\xS{ABCDF}-\xS{ABDEF}-\xS{CDF}-\xS{AD}-\xS{AF}-\xS{BC}-\xS{BE}-\xS{BF}-\xS{D}-\xS{F}  $
		\\$ +\, \xS{ABDE}+\xS{ACDF}+\xS{ABF}+\xS{ADF}+\xS{BCD}+\xS{BCF}+\xS{BEF}+\xS{DF} $}
\\ 
	& \shadeI{$ -\xI{ABCDF}-\xI{ABDEF}+\xI{ABCD}+\xI{ABCF}+2\xI{ABDF}+\xI{ABEF}+\xI{ADEF}+\xI{BCDF}+\xI{BDEF}$} 
\\	& \shadeI{$-\xI{ABC}-\xI{ABD}-\xI{ABF}-\xI{AEF}-2\xI{BDF}-\xI{CDF}-\xI{DEF} $} 
\\ 
IIC	& \shadeIs{$ -\tI{BC:D:F}-\tI{AD:BE:F}-\ctI{A:B:CD}{F} $}
\\  \hline 
$\QNknn{6}{10, 11, 3}{23}$ 
	&  \makecell{$  -\xS{ABCDE}-\xS{ABCDF}-\xS{ABDEF}-\xS{CDF}-\xS{AD}-\xS{AF}-\xS{BC}-\xS{BE}-\xS{BF}-\xS{DE}-\xS{A}  $
		\\$ +\, \xS{ABDE}+\xS{ACDF}+\xS{BCDE}+\xS{BCDF}+\xS{ABC}+\xS{ABF}+\xS{ADE}+\xS{ADF}+\xS{BEF} $}
\\ 
	& \shadeI{$ -\xI{ABCDE}-\xI{ABCDF}-\xI{ABDEF}+2\xI{ABCD}+\xI{ABCE}+\xI{ABCF}+\xI{ABDE}+2\xI{ABDF}+\xI{ABEF}+\xI{ACDE}$} 
\\	& \shadeI{$+\xI{ADEF}+\xI{BDEF} -\xI{ABC}-2\xI{ABD}-\xI{ABE}-\xI{ABF}-\xI{ACD}-\xI{ACE}-\xI{AEF}-\xI{BDF}-\xI{DEF} $} 
\\ 
IIC	& \shadeIs{$ -\tI{A:BC:DE}-\tI{AD:BE:F}-\ctI{A:B:CD}{F} $}
\\  \hline 
$\QNknn{6}{8, 6, 1}{35}$ 
	&  \makecell{$  -\xS{ABCDF}-\xS{ACEF}-\xS{BDEF}-\xS{CDF}-2\xS{AF}-\xS{DE}-\xS{EF}-\xS{A}-\xS{B}-2\xS{C}-\xS{D}  $
		\\$ +\, \xS{BCDF}+\xS{ABF}+\xS{ACF}+\xS{ADF}+\xS{AEF}+\xS{BDE}+\xS{CEF}+\xS{DEF}+\xS{AC}+\xS{CD} $}
\\ 
	& \shadeI{$ -\xI{ABCDF}+\xI{ABCD}+\xI{ABCF}+\xI{ABDF}+\xI{ACDF}+\xI{ACEF}+\xI{BDEF}$} 
\\	& \shadeI{$-\xI{ABC}-\xI{ABD}-\xI{ACD}-\xI{ACE}-\xI{ACF}-\xI{BDF}-\xI{BEF}-\xI{CDF} $} 
\\ 
IIIC	& \shadeIs{$ -\tI{AF:C:D}-\tI{B:DE:F}-\tI{A:C:EF}-\ctI{A:B:CD}{F} $}
\\  \hline 
$\QNknn{6}{9, 8, 2}{46}$ 
	&  \makecell{$  -2\xS{ABCDF}-\xS{ACEF}-\xS{BDEF}-\xS{CDF}-\xS{AF}-\xS{DE}-\xS{EF}-\xS{A}-\xS{B}-2\xS{C}-\xS{D}  $
		\\$ +\, \xS{ABCF}+\xS{ABDF}+\xS{ACDF}+\xS{BCDF}+\xS{AEF}+\xS{BDE}+\xS{CEF}+\xS{DEF}+\xS{AC}+\xS{CD} $}
\\ 
	& \shadeI{$ -2\xI{ABCDF}+2\xI{ABCD}+\xI{ABCF}+\xI{ABDF}+\xI{ACDF}+\xI{ACEF}+\xI{BCDF}+\xI{BDEF}$} 
\\	& \shadeI{$-\xI{ABC}-\xI{ABD}-\xI{ACD}-\xI{ACE}-\xI{ACF}-\xI{BCD}-\xI{BDF}-\xI{BEF}-\xI{CDF} $} 
\\ 
IIIC	& \shadeIs{$ -\tI{ABF:C:D}-\tI{B:DE:F}-\tI{A:C:EF}-\ctI{A:B:CD}{F} $}
\\  \hline 
\end{tabular}
\end{center}
\caption{HEI information quantities for $\nN=6$, with the same notational conventions as in \cref{tab:HEI_N5}.
}
\label{tab:HEI_N6}
\end{table}
  

In the rest of this section, we make a few observations regarding the actual expressions.  These are mostly of a phenomenological nature; we relegate a more systematic analysis to future work.
\subsection{Structural properties}
\label{ss:structure}

\paragraph{Compactness:}
Evidently, the \ctform\ has many fewer terms than the I-basis or S-basis forms, typically by a factor of 5 or so, though in some cases the reduction is from 47 terms in the I-basis to only 4 in the \ctform.
A priori, this was not guaranteed by the arguments that motivated these recastings (-- one can certainly cook up information quantities for which this is not the case, or indeed those which do not admit \ctform\ in the first place, such as single $\sI_n$ for $n\ne3$), so the very fact is intriguing.  Moreover, we can see that many $\N=6$ HEIs are in fact simpler than some $\N=5$ HEIs, a feature we expect to prevail at higher $\N$. However, while the original hope was that the recasting into the \ctform\ would be sufficiently suggestive to explicate the operational meaning of these HEIs, this hope has not yet been realized.  We therefore leave it as a challenge for the future.

\paragraph{Structural similarities:}
Upon closer examination of the \ctform\ expressions for the HEIs, we find that many HEIs are very similar to each other, sometimes differing by only a single letter in a single argument.  Such structural similarities between distinct HEI orbits may suggest a common origin, in the sense of modifying the same lower-$\nN$ HEI in combinatorially distinct ways.  For example, let us compare the compact form for 
$\Qnn{5}{4, 2}{7}$, $\Qnn{6}{5, 3}{13}$, $\Qnn{6}{6, 5, 1}{14}$, and $\Qnn{6}{7, 6, 1}{17}$ (the first from \cref{tab:HEI_N5} and the other three from \cref{tab:HEI_N6}, for convenience reproduced here):
\begin{equation}\label{eq:compact_compare}
	\mqty{ 
		\QNknn{5}{4, 2}{7} &=& -\tI{C:E:F}-\tI{A:B:EF}-\ctI{A:B:C}{F} \\
		\QNknn{6}{5, 3}{13} &=& -\tI{CD:E:F}-\tI{A:B:EF}-\ctI{A:B:C}{F} \\
		\QNknn{6}{6, 5, 1}{14} &=& -\tI{C:E:F}-\tI{AD:B:EF}-\ctI{A:B:C}{F} \\
		\QNknn{6}{7, 6, 1}{17} &=& -\tI{CD:E:F}-\tI{AD:B:EF}-\ctI{A:B:C}{F}
		}
\end{equation}
We see that the last three ($\N=6$) quantities build up on the first ($\N=5$) one by augmenting a single argument in, respectively, the first, second, and both, $I_3$ terms, by a new party $\x{D}$ which did not appear in the $\N=5$ quantity.  Since the two $I_3$ terms in the $\N=5$ quantity ($\Qnn{5}{4, 2}{7}$) 
had arguments with different cardinalities, all four quantities have different number of terms when expanded in the I-basis (as indicated by the i\# subscripts).  Hence the structural relation between these four expressions would have been difficult to discover without the compact form.  Moreover, the originally obtained quantities might be in different orbits, making their relations even more obscure.  On the other hand, in the \ctform\ with only a few terms, it is much easier to obtain a permutation that aligns the quantities for direct comparison as above, which is particularly convenient when the arguments have different cardinalities.\footnote{ \,
	Indeed, the specific instance of each permutation orbit was chosen precisely so as to render the compact expressions conveniently aligned.  In particular, we condition on $\x{F}$, with the last term always being $\ctI{A:B:C}{F}$, and the remaining freedom in permuting is used to obtain as much similarity as possible on the other terms.
}

Another type of structural relation is illustrated by the following triplet:
\begin{equation}\label{eq:compact_compadd2}
	\mqty{ 
		\QNknn{6}{8, 6, 1}{35} 
            &=& -\tI{AF:C:D}-\tI{B:DE:F}-\tI{A:C:EF}-\ctI{A:B:CD}{F} \\
		\QNknn{6}{8, 6, 1}{33} 
            &=& -\tI{BF:C:D}-\tI{B:DE:F}-\tI{A:C:EF}-\ctI{A:B:CD}{F} \\
		\QNknn{6}{9, 8, 2}{46} 
            &=& -\tI{ABF:C:D}-\tI{B:DE:F}-\tI{A:C:EF}-\ctI{A:B:CD}{F}
		}
\end{equation}
The only difference between these three expressions is in the first argument of the first term, where $\x{F}$ is augmented by $\x{A}$, $\x{B}$, and $\x{AB}$, respectively.\footnote{ \, 
	Note that the first two expressions, while having the same number of I-basis terms, belong to distinct orbits, since $\x{A}$ and $\x{B}$ play different roles in the remaining terms.}
On the other hand, this type of `merging' of a pair of HEIs does not always produce the third.  For example, while there is a structurally analogous companion to $\QNknn{6}{7, 6, 1}{17}$ 
(cf.\ \cref{eq:compact_compare}), 
where the first argument of the first term changes from $\x{CD}$ to $\x{AC}$,
the `merged' version,  
with the first argument becoming $\x{ACD}$,
is not an HEI.\footnote{ \, 
In this case it is a valid holographic inequality, but it does not define a facet, because it is redundant (in fact it is the sum of two different permutations of an uplift of the simplest $\nN=5$ HEI, $\QNknn{5}{3, 2}{5}$).}

There are several other types of near-complete similarities between certain pairs of HEIs.  One rather common type is when one singleton gets `transferred' between two arguments in a single I term, such as
\begin{equation}\label{eq:compact_compadd3}
	\mqty{ 
		\QNknn{6}{9, 9, 2}{219} 
            &=& -\tI{BF:C:D}-\tI{AC:B:DE}-\ctI{A:B:C}{D}-\ctI{A:CE:F}{B} \\
		\QNknn{6}{9, 9, 2}{222} 
            &=& -\tI{B:CF:D}-\tI{AC:B:DE}-\ctI{A:B:C}{D}-\ctI{A:CE:F}{B}
		}
\end{equation}
where in the 1st term, the $\x{F}$ gets transferred from $\x{B}$ to $\x{C}$.  Often, but not always, this in fact constitutes a pair of augmentations; in this case, the expression without the $\x{F}$ on either argument in the 1st term is indeed another HEI, namely $\Qnn{6}{8,8,2}{19}$. 
It would be interesting to understand such structural relations more systematically; we leave this for future explorations.

\paragraph{Building up HEIs:}
One particularly useful outcome of a systematic understanding of the structural relations between the HEIs would be a more direct solution-generating technique for higher-$\N$ HEIs, by bootstrapping successive augmentations or other modifications.  To exemplify the idea, consider MMI (uplifted from $\N=3$),
$\QNknn{3}{1}{2} = -\xI{A:B:C}$.  We can augment one argument, say $\x{C} \to \x{CD}$, to get a fine-grained version of MMI,\footnote{ \, 
	Equivalently, we can obtain this by considering the original MMI at $\N=4$, using \cref{eq:I3swap} to replace one argument by complement of the union of the original ones, uplifting to $\N=6$, and permuting to get the desired form.
} which is in fact $\QNknn{4}{2, 1}{3}=-\txI{A:B:CD}$.   Using \cref{eq:CondMultInf}, this can be recast as $\QNknn{4}{2, 1}{3}=-\txI{A:B:D}-\ctxI{A:B:C}{D}$.  Now, we can view this as the starting point and augment one argument in the first term, say $\x{B} \to \x{BE}$.  (Note that this is no longer a mere fine-graining, since other occurrences of $\x{B}$ which appear elsewhere in the quantity are left un-augmented.)  Although such an augmentation was not a-priori guaranteed to lead to a valid HEI, nevertheless, it is indeed one, namely
$\QNknn{5}{3, 2}{5}=-\tI{A:BE:D}-\ctI{A:B:C}{D}$.  Since $\x{F}$ does not appear, this is intrinsically an $\N=5$ HEI.  However, we can also augment two of the $I_3$ arguments by different subsystems, resulting in a genuine $\N=6$ HEI, namely
$\QNknn{6}{5, 5, 1}{12} = -\tI{AF:BE:D}-\ctI{A:B:C}{D}$.  
Note that had we merely fine-grained $\x{A}\to\x{AF}$ in $\QNknn{5}{3, 2}{5}$, we would have obtained an uplift of $\QNknn{5}{3, 2}{5}$ which in our search for $\N=6$ HEIs indeed showed up, as another instance in the orbit of $\QNknn{5}{5, 5, 1}{6}$. 

A distinct way of building up new HEIs at a fixed $\N$ is to leave the arguments in the individual tripartite terms unmodified, but add more tripartite terms.  Non-redundancy requires that each new term is a C rather than I.  We can then obtain a chain of HEIs, such as I - IC - ICC - ICCC, which has a realization already at $\N=5$:
\begin{equation}\label{eq:compact_compadd}
	\mqty{ 
		\QNknn{5}{2, 1}{3} 
            &=& -\tI{A:BE:D}& &  & \\
		\QNknn{5}{3, 2}{5} 
            &=& -\tI{A:BE:D} &-\ctI{A:B:C}{D}&  &  \\
		\QNknn{5}{4, 3}{9} 
            &=& -\tI{A:BE:D}& -\ctI{A:B:C}{D} & -\ctI{C:D:E}{A} & \\
		\QNknn{5}{5, 5, 1}{6} 
            &=& -\tI{A:BE:D} &-\ctI{A:B:C}{D}& -\ctI{C:D:E}{A} & -\ctI{B:C:E}{AD}
		}
\end{equation}
For both types of building up new HEIs from simpler ones, the crucial question is what are the necessary and sufficient conditions for its applicability; in other words, what characterizes the augmentations which produce another HEI.

\section{Discussion}
\label{sec:discuss}

Our main result is that the non-negative information quantities associated to HEIs cannot at face value be understood as measures of correlations, as they lack the basic property of monotonicity under partial tracing. This does not preclude the existence of useful multipartite measures of holographic correlations, but does eliminate the only natural candidates arising from the polyhedral structure of the holographic entropy cone.\footnote{ \, 
Since measures of multipartite correlations are famously poorly understood, it is also conceivable that our requirement of monotonicity simply be too strong. Our result would then potentially motivate a search for weaker, non-bipartite notions of monotonicity, or for a more relational requirement on correlations among subsystems, rather than a one-directional comparison between competing terms under partial tracing.} Besides this negative result, throughout this paper we have proved several structural properties of HEIs, as well as explored a fruitful rewriting thereof in terms of $I_3$ and conditional $I_3$ quantities. 
This \ctform\ has proven to be a powerful heuristic ansatz for the discovery of a plethora of novel HEIs which constitute facets of the $\nN=6$ holographic entropy cone (see \cref{asec:n6heis}).

While we have not proved that every HEI must necessarily be expressible in \ctform, it is natural to conjecture that this is indeed the case.  One encouraging piece of evidence comes from the fact that all the known HEIs for $\N \le 6$ can be recast in this form. Even the new $\N=7$ HEI introduced by \cite{Czech:2022fzb} (cf.\ their eq.~(2.1), which has the $\sI_n$-multiplicities $\Ink{10, 15, 6}$) can likewise be recast in the \ctform, as ICCCCCC.  This seems a highly-nontrivial check, since it has the minimal number of $\sI_3$'s possible to `soak up' all the higher $\sI_n$ terms.  Indeed, \cite{Czech:2022fzb} motivated this inequality by positing ``oxidation" of $\N=5$ HEI, built so as to guarantee that upon trivializing any subsystem, the HEI necessarily remains valid.  The building blocks of the \ctform\ partly incorporate this, in the sense that when the oxidized party is conditioned on, the C term reduces to I, while if it is one of the other arguments, the term trivializes, in both cases ensuring sign-definiteness.  However this by itself does not guarantee sign-definiteness when only a subset of an argument in a C term trivializes.  It would be interesting to see whether this observation could be utilized to further constrain the possible combinations of arguments in multiple terms of the \ctform. 

The challenge of understanding and classifying HEIs remains open. Besides recasting HEIs in \ctform, one could try to extract structural properties of HEIs by exploring their symmetries. Naively characterizing the symmetry group of HEIs is however not a fruitful venue: most $\nN=6$ HEIs have no symmetries at all. Instead, a more promising pursuit consists of decomposing HEIs into representations of the symmetric group \cite{Czech:2022fzb}, an idea that generalizes the intriguing results looking at symmetric invariants of \cite{Czech:2021rxe,Fadel:2021urx}.

In a parallel series of works \cite{Hernandez-Cuenca:2019jpv,Hernandez-Cuenca:2022pst,He:2022bmi,He:2023cco,He:2023aif} which focus on extreme rays, rather than the facets, of the HEC, a more primal structure plays a prominent role:  One can define the subadditivity cone (SAC) for any number of parties as the intersection of half-spaces given by all instances of SA, cf.\ \cref{eq:SA}.  Since this region of entropy space merely restricts the entropy vector to have a non-negative amount of correlation between any pair of subsystems, it contains the HEC (strictly for all $\N \ge 3$).  Nevertheless, subject to a certain conjecture, \cite{Hernandez-Cuenca:2022pst} showed that the HEC extreme rays can be obtained from the set of holographically-realizable extreme rays of the SAC of a more fine-grained system, by a suitable coarse-graining procedure.\footnote{ \, 
	Operationally, while finding all the extreme rays of the SAC is prohibitively difficult for larger $\N$, obtaining the vastly smaller set of SSA-compatible ones can be performed much more efficiently, and in fact can be used to obtain the full set of holographically realizable extreme rays of the SAC for $\N=6$ \cite{He:2023wip}. }
In other words, subadditivity fundamentally underlies the emergence of the HEC.  It would be interesting to explore whether the relevant SAC extreme rays can be repackaged to directly determine the HEIs, perhaps as a combination of polychromatic mutual information quantities.  This would then elucidate the role of HEIs and their relation to correlation measures.

Another recurring theme, present in both facet and extreme ray representations, is the structural relation between different values of $\N$.  It was already emphasized in \cite{Hernandez-Cuenca:2022pst} that since the holographic system can be refined arbitrarily, fixing a specific value of $\N$ is artificial.  It would be useful to formulate a more algorithmic way of `bootstrapping' from smaller $\N$ to larger $\N$.  In the case of extreme rays, the graphical representation of larger-$\N$ extreme rays typically
resembles some gluing of smaller-$\N$ extreme rays.\footnote{ \, This can be observed in fig.\ 12 of \cite{Hernandez-Cuenca:2022pst}. We thank Bogdan Stoica for some early explorations and conversations about this idea.}
For facets, we have seen that they admit structural similarities when presented in the \ctform\ which persist across $\N$.  
Again, it would be interesting to examine whether these two observations are connected in some natural way.

Finally, recall that the multipartite information quantities, apart from their convenience in presenting the HEIs, also provide a useful diagnostic of the basic geometrical features of the bulk configuration.  In particular, they diagnose the connectivity of the entanglement wedges of the various subsystems, cf.\ \cref{prop:InEWconnectivity}.
This connectivity has operational implications for the HEIs.  In the case of mutual information, \cite{May:2019odp} used focusing arguments in general relativity to argue for existence of efficient non-local quantum computation protocols, which was further generalized by  \cite{May:2022clu} in ``$n$-to-$n$ connected wedge theorem''.  
In fact, the consequence of theorem 19 in \cite{May:2022clu} on the non-vanishing mutual information across any bipartition of the full system might be more naturally rephrased as simply the non-vanishing of $I_n$.
It would be interesting to see if this can provide any insight towards formulating a more operational meaning of the higher-party HEIs.

\acknowledgments

It is a pleasure to thank 
Patrick Hayden,
Matt Headrick,
Geoff Penington,
Mukund Rangamani,
Jonathan Sorce, 
and especially Max Rota
for useful discussions.
V.H.\ and S.H-C.\ thank the Centro de Ciencias de Benasque Pedro Pascual for hospitality during the program ``Gravity - New perspectives from strings and higher dimensions'', and also the Yukawa Institute for Theoretical Physics for hospitality during the long term workshop YITP-T-23-01 where this paper was completed.
V.H.\ and F.J.\ are supported in part by the U.S.\ Department of Energy grant DE-SC0020360 under the HEP-QIS QuantISED program and by funds from the University of California.
S.H-C.\ is supported by the U.S. Department of Energy Award DE-SC0021886.

\appendix

\section{Factorized notation for \texorpdfstring{$\sI_n$}{In}'s}
\label{asec:Infactornotn}

Here we briefly describe a powerful notational simplification, which serves to motivate  interesting relational properties of the multipartite informations.  Let us start with the entropy basis representation shorthand $\xS{AB} \eqr \x{AB}$ etc., where we can then rewrite a given expression in a `factorized' form, for example, $\x{AB}+\x{BC} = \x{(A+C) B}$, where the `multiplication' is to be understood purely at the level of notation (i.e., presenting the expression in a compact form) rather than pertaining to the actual values of the entropy itself; we will use the $\eqr$ sign to remind ourselves of this fact.  It will further be convenient to use the notation $\varnothing \eqr 1$, which implements the relation $\x{X} \cup \varnothing = \x{X}$.  Any residual instances of $1$ in the resulting expression then represent the empty set and can be ignored (or inserted at will).  To avoid ambiguity, we will put a tilde, $\Tilde{\x{X}} \equiv -1+\x{X}$.  This factorized form becomes particularly compact for the $\sI_n$'s, which (up to an overall sign) can be written as $n$ simple factors of $\Tilde{\x{X}}$ for each argument $\x{X}$.  For example,
\begin{equation}\label{eq:Intfactorization}
	\til{\sI}_2(\x{A}_1:\x{A}_2) \eqr -1+\x{A}_1+\x{A}_2-\x{A}_1\,\x{A}_2 = - (-1+\x{A}_1) \, (-1+\x{A}_2) 
    = - \Tilde{\x{A}}_1 \, \Tilde{\x{A}}_2 
\end{equation}	
and more generally\footnote{ \, 
Since we define $I_1(\x{A}):=\x{A}$, we can set $\til{I}_1(\x{A}):=\Tilde{\x{A}}$ and $I_0:=0$, sot that $\til{I}_0:=-1$.
}
\begin{equation}\label{eq:Infactorization}
	\til{\sI}_n(\x{A}_1:\ldots:\x{A}_n) \eqr 
	\pqty{-1}^{n+1} \,  
	\Tilde{\x{A}}_1 \cdots \Tilde{\x{A}}_n 
\end{equation}	

This shorthand factorized form allows us to prove interesting relations between the multipartite informations.
Particularly useful one is 
\begin{equation}\label{eq:Intproduct}
	\til{\sI}_{n+m}(\x{A}_1:\ldots:\x{A}_n:\x{B}_1:\ldots:\x{B}_m) = 
	-\til{\sI}_n(\x{A}_1:\ldots:\x{A}_n)  \, \til{\sI}_m(\x{B}_1:\ldots:\x{B}_m) 
\end{equation}	
for any $n,m\ge0$.  
In terms of the untilded quantities, we can rewrite this as
\begin{equation}\label{eq:Inproduct}
\begin{split}
	\sI_{n+m}&(\x{A}_1:\ldots:\x{A}_n:\x{B}_1:\ldots:\x{B}_m)  \\
	&= - \sI_n(\x{A}_1:\ldots:\x{A}_n)  \, \sI_m(\x{B}_1:\ldots:\x{B}_m) 
	+ \sI_n(\x{A}_1:\ldots:\x{A}_n) + \sI_m(\x{B}_1:\ldots:\x{B}_m) \ .
\end{split}\end{equation}	
It is worth stressing that the `multiplication' employed here does not commute with evaluating the actual entropy whereas the summation does.

\section{Obtaining \ctform\ for HEIs}
\label{asec:heis}

In \cref{ss:HEIrecasting} we have presented a number of new $\N=6$ HEIs, as well as previously known $\N=5$ HEIs, written in a \ctform.  We have seen that this compact recasting is especially convenient, not only to represent the information quantity efficiently, but also to extract structural relations with other information quantities.  
In this appendix we explain how we can obtain this compact form, starting from the I-basis representation.  (In fact this is how we originally arrived at the compact form of the $\N=5$ HEIs, which inspired the presently-utilized search strategy.)

\paragraph{Grouping $\sI_n$ terms to compact expressions:}
Consider any information quantity writ- ten out (uniquely) in the I-basis. We can often group terms (non-uniquely) to re-write it in terms of fewer terms involving $I_n$'s with polychromatic arguments or involving conditional $I_n$'s.  This will turn out to be particularly efficient for the information quantities associated to HEIs, where the final expression involves only $n=3$, i.e.\ tripartite information and conditional tripartite information.  

The conversion makes use of \cref{eq:CondMultInf}, which for $n=3$ can be written as
 \begin{equation} \label{eq:CondTripInf}
    I_3(\x{B}:\aset{\pI} \vert \x{A}) 
    =  I_3(\x{AB}:\aset{\pI}) - I_3(\x{A}:\aset{\pI}) \\
    =  I_3(\x{B}:\aset{\pI}) - I_{4}(\x{A}:\x{B}:\aset{\pI}) \ .
\end{equation}
In particular we can group the pair $\sI_3- \sI_4$ to get conditional $I_3$, 
\begin{equation}\label{eq:simpcond3}
\xI{ABC} - \xI{ABCD} = \ctxI{A:B:C}{D} \ ,
\end{equation}
or we can group the triplet $\sI_3+ \sI_3- \sI_4$ to get polychromatic $I_3$, 
\begin{equation}\label{eq:simppol3}
\xI{ABC} + \xI{ABD} - \xI{ABCD} = \txI{A:B:CD} \ .
\end{equation}

Furthermore, we can iterate to group more terms.  For example, for $I_3$ having arguments with cardinalities  $\pqty{1:1:3}$, we can write this out as the 7-term expression with 3 $\sI_3$'s, 3 $\sI_4$'s, and 1 $\sI_5$,
\begin{equation}\label{eq:simppol113}
 \xI{ABC} + \xI{ABD} + \xI{ABE} - \xI{ABCD} - \xI{ABCE} - \xI{ABDE} + \xI{ABCDE}
 = \txI{A:B:CDE} \ ,
\end{equation}
so if the I-basis expression happens to include these 7 terms, we can compress them into the single $\tI{1:1:3}$ term.
Notice that both singletons from the RHS (here $\x{A}$ and $\x{B}$) appear as arguments in each term on the LHS.
Similarly, if we have $I_3$ with argument cardinalities  $\pqty{1:2:2}$, then this gets decomposed into 9 terms, namely  4 $\sI_3$'s, 4 $\sI_4$'s, and 1 $\sI_5$, 
\begin{equation}\label{eq:simppol122}
\xI{ABD} + \xI{ABE} + \xI{ACD} + \xI{ACE} - \xI{ABCD} - \xI{ABCE} - \xI{ABDE} - \xI{ACDE} +\xI{ABCDE} =\txI{A:BC:DE} \ ,
\end{equation}
where the singleton argument on the RHS (here $\x{A}$) is the common subscript on all the $\sI_n$ terms, and the grouping for the doubletons is such that its `factorization' gives the given $\sI_3$ terms.  

In fact, this pattern continues for arbitrary cardinalities of the arguments.  We have a de-facto `factorization' between all arguments, which comes from the special structure of the $\sI_n$'s explained in \cref{asec:Infactornotn}.  Specifically, the colons separating the arguments act as multiplication, and within each argument composed of $\ell$ singletons, say $\x{X} = \prod_i \x{A}_i$
the factor is given by $[1 -\prod_i  (1-\x{A}_i))]$.  
(Note that when $\ell=1$, this factor reduces to just $\x{A}_1$.)

Moreover, this type of structure extends to the conditioning as well: there we simply replace 
$[1-\Pi]$ by $[\Pi]$, which becomes the product $\prod_j(1-\x{B}_j)$ over each singleton $\x{B}_j$ comprising the party conditioned on.  For example, to extract the expression for conditioning on a doubleton, we can combine the following 4 $\sI_n$ terms:
\begin{equation}\label{eq:simpcond2}
\xI{ABC} - \xI{ABCD} - \xI{ABCE} +\xI{ABCDE} = \ctxI{A:B:C}{DE} \ .
\end{equation}
Notice that here $\aset{\x{ABC}}$ appears in all terms, while the arguments conditioned on ($\x{CD}$) appear in the combination $(1-\x{C})(1-\x{D})$.  
Similarly, we can of course also combine the rule for polychromatic rewriting with that for conditioning rewriting, with the `factorization' combined accordingly; e.g.
\begin{equation}\label{eq:simppolcond}
\xI{ABC} + \xI{ABD} - \xI{ABCD} - \xI{ABCE} - \xI{ABDE} +\xI{ABCDE}=  \ctxI{A:B:CD}{E} \ .
\end{equation}

Conversely, one could likewise write a single $\sI_n$ in terms of polychromatic and conditional tripartite informations, but with non-sign-definite coefficients; for example:
\begin{equation} \label{eq:I5expand}
\begin{aligned}
&\xI{ABCDE} \\
	&\!=   \!\txI{A:D:E} + \!\txI{B:D:E} + \!\txI{C:D:E}  - \!\txI{AB:D:E} - \!\txI{AC:D:E} - \!\txI{BC:D:E} + \!\txI{ABC:D:E} \\
	&= \!\txI{C:D:E}- \!\ctxI{C:D:E}{A} - \!\ctxI{C:D:E}{B} +  \!\ctxI{C:D:E}{AB} \ .
\end{aligned}
\end{equation}
Notice that in the first expression, the first argument from each term groups to a form of $\xI{ABC}$ with $\aset{\x{DE}}$ in the other two arguments coming along for the ride, while in the second expression the conditioning is on parties grouped effectively as $(1-\x{A})(1-\x{B})$, now with $\aset{\x{CDE}}$ coming along for the ride.

Finally, note that  we can also rewrite the polychromatic $I_3$ in terms of conditional and monochromatic ones; for example,
\begin{equation}\label{eq:pol2}
\txI{A:B:CD} = \ctxI{A:B:C}{D} + \txI{A:B:D} 
\end{equation}
\begin{equation}\label{eq:pol3}
\txI{A:B:CDE} = \ctxI{A:B:C}{DE} + \ctxI{A:B:D}{E} + \txI{A:B:E} 
\end{equation}
and so forth,
so we could in principle reduce all polychromatic \ctform\ quantities to expressions involving just singleton $I_3$'s conditioned on polychromatic arguments (with the singleton $\sI_3$ terms regarded as conditioned on zero-cardinality subsystem).

\paragraph{Simple example, $\QNknn{6}{4,3}{24}$:} 

In the I-basis, the expression for some specific instance along the orbit is 
$$
\QNknn{6}{4,3}{24} 
	= + \xI{ABCD} + \xI{ABCE} + \xI{ABCF} - \xI{ABC} - \xI{ABF} - \xI{ACE} - \xI{BCD} 
$$
which involves 3 $\sI_4$'s and 4 $\sI_3$'s.  Examining the subscripts, we see that we can group one of the $\sI_4$'s with two of the $\sI_3$'s to get a polychromatic $I_3$ and the remaining 2 $\sI_4$'s each with a single $\sI_3$ to get conditional $I_3$.  Since $\x{ABC}$ is included in each of the three $\sI_4$ subscripts, whereas the remaining 3 $\sI_3$ subscripts are contained in precisely one of the $\sI_4$ subscripts, using \cref{eq:simpcond3} and \cref{eq:simppol3}, we have three options at the above-specified grouping:
\begin{equation} \label{eq:Q18simplif}
\begin{aligned}
\QNknn{6}{4,3}{24} 
	&= \pqty{\xI{ABCD} - \xI{ABC} - \xI{BCD}} + \pqty{\xI{ABCE}- \xI{ACE}}  + \pqty{\xI{ABCF} - \xI{ABF}} \\
	& \qquad = -\txI{B:C:AD}  -\ctxI{A:C:E}{B}-\ctxI{A:B:F}{C}\\
	&= \pqty{\xI{ABCD}  - \xI{BCD}}+ \pqty{\xI{ABCE}  - \xI{ABC} - \xI{ACE}} + \pqty{\xI{ABCF} - \xI{ABF}}\\
	& \qquad = -\ctxI{B:C:D}{A} -\txI{A:C:BE} -\ctxI{A:B:F}{C}\\
	&= \pqty{\xI{ABCD} - \xI{BCD}} + \pqty{\xI{ABCE}  - \xI{ACE}} + \pqty{\xI{ABCF} - \xI{ABC} - \xI{ABF}} \\
	& \qquad = -\ctxI{B:C:D}{A}-\ctxI{A:C:E}{B}-\txI{A:B:CF}
\end{aligned}
\end{equation}
Note that all three compact expressions have the same structural form (not just as ICC, or even more specifically as a single tripartite information with argument cardinalities $\pqty{1:1:2}$ along with two $\pqty{1:1:1 \, |\, 1}$ conditional tripartite informations, but also relationally between the various arguments); in other words, they can be rotated into each other under permutations.   

\paragraph{Another simple example, $\QNknn{6}{5, 5, 1}{12}$:}  
Here we start with a longer expression in the I-basis (though shorter than 
$\QNknn{6}{4,3}{24}$ 
in the S-basis), which has 1 $\sI_5$, 5 $\sI_4$'s, and  5 $\sI_3$'s:
$$
\QNknn{6}{5, 5, 1}{12}
	= - \xI{ABCEF} + \xI{ABCD} + \xI{ABCE} + \xI{ABCF} + \xI{ABEF} + \
\xI{ACEF} - \xI{ABC} - \xI{ABF} - \xI{ACE} - \xI{AEF} - \xI{BCD} 
$$
Since the $\sI_5$ argument is missing a $\x{D}$, the first $\sI_4$ cannot be combined with it naturally, whereas the other 4 can.  Guessing that the first $\sI_4$ and last  $\sI_3$ terms pair (into conditional tripartite information) and the remaining 9 terms group according to \cref{eq:simppol122} (into a polychromatic $\pqty{1:2:2}$ tripartite information), we can readily rewrite this as 
$$
\QNknn{6}{5, 5, 1}{12}
	= - \txI{A:BE:CF} - \ctxI{B:C:D}{A}
$$

\paragraph{Non-uniqueness:}
Permutation symmetry of the arguments of $I_n$'s imply that groupings of polychromatic and/or conditional multipartite informations are non-unique.  For example, if we write $\xI{ABCD}$ using \cref{eq:Inadditivitysplit} in form of 3 $\sI_3$'s, change which two arguments get grouped together, and cancel terms (or equivalently if  we use permutation symmetry in the first group of arguments in $\ctxI{B:C:D}{A} $), we have
\begin{equation} 
\ctxI{B:C:D}{A} 
	= \xI{BCD} - \xI{ABCD} 
	= \txI{AB:C:D} - \txI{A:C:D}
	= \txI{AC:B:D} - \txI{A:B:D}
\end{equation}
which can be re-cast into an identity of the form
\begin{equation} 
\txI{AB:C:D} + \txI{A:B:D} = \txI{AC:B:D} + \txI{A:C:D}
\end{equation}
Note that this is generalizable to any $I_n$, replacing $\x{D}$ by any $\aset{\pI}$ (or even the empty set).

As a more complicated example, from \cref{eq:simpcond3} and \cref{eq:simppol3} we can re-express
\begin{equation} 
\begin{aligned}
\xI{ACD} 
	&= \txI{AB:C:D} - \ctxI{B:C:D}{A} \\
	&= \txI{A:BC:D} - \ctxI{A:B:D}{C} \\
	&= \txI{A:C:BD} - \ctxI{A:B:C}{D} 
\end{aligned}
\end{equation}
where the separate lines interchange which argument the extra party $\x{B}$ not appearing in the LHS gets added on.
A seemingly different kind of identity is apparent in the triplet of compact expressions of \cref{eq:Q18simplif} which were obtained by re-grouping terms. However, all such relations are ultimately rooted in the permutation symmetry of $I_n$.

\paragraph{Signs and term counting:}
Given the multitude of identical expressions, it is useful to have some structural organizing principle which would facilitate determining what kind of a simplification of a given HEI is possible.  
First of all, recall (cf.\ \cref{prop:relcoefbases}) that in any HEI written in the I-basis, the $\sI_n$ terms have coefficient of sign determined by $n$, namely $(-1)^n$.  By \cref{eq:CondMultInf}, we see that any conditional or polychromatic $I_n$ with sign $(-1)^n$ preserves this structure when decomposed in the I-basis.  
This observation suggests that we can characterize each quantity in terms of how many $(-1)^n I_n$ terms it has for each $n$.  For superbalanced HEI quantities, we start at $n=3$, and if we only consider $\N\le 6$, we only need to go up to $n=6$.  However, in all the HEIs we use for examples, the $\sI_6$ term is absent.  So each term is characterized by a triplet of integer coefficients, given by 
$\Ink{i_3,i_4,i_5}$ where $i_n$ is the number of $(-1)^n I_n$ terms.

\begin{table}[htbp]  
\begin{center}
\scriptsize
\begin{tabular}{| c | c | c |}
\hline
IQ &  $i\#$ classification & equation\\ 
\hline
\hline
$-I_{(1:1:1)}$
	& $\Ink{1,0,0}$
	& defn.
\\  \hline   \hline
$I_{(1:1:1:1)}$
	& $\Ink{0,1,0}$
	& defn.
\\  \hline 
$-I_{(1:1:2)}$
	& $\Ink{2,1,0}$
	& \cref{eq:simppol3}
\\  \hline 
$-C_{(1:1:1|1)}$
	& $\Ink{1,1,0}$
	& \cref{eq:simpcond3}
\\  \hline   \hline
$-I_{(1:1:1:1:1)}$
	& $\Ink{0,0,1}$
	& defn.
\\  \hline 
$-I_{(1:1:3)}$
	& $\Ink{3,3,1}$
	& \cref{eq:simppol113}
\\  \hline 
$-I_{(1:2:2)}$
	& $\Ink{4,4,1}$
	& \cref{eq:simppol122}
\\  \hline 
$-C_{(1:1:1|2)}$
	& $\Ink{1,2,1}$
	& \cref{eq:simpcond2}
\\  \hline 
$-C_{(1:1:2|1)}$
	& $\Ink{2,3,1}$
	&\cref{eq:simppolcond}
\\  \hline 
\end{tabular}
\end{center}
\caption{$i\#$ classification for (conditional) multipartite information depending in size of arguments.}
\label{tab:kclassif}
\end{table}

Let us now establish a more refined shorthand for conditional and polychromatic $I_n$'s based on the size of the argument:  $I_{(1:1:1)}$ is the singleton tripartite information while $I_{(1:1:2)}$ has doubleton for one of the arguments, and $C_{(1:1:1|1)}$ etc.\ indicates the corresponding conditional tripartite informations. 
Table \ref{tab:kclassif} shows how these decompose in the $i\#$ classification.
This classification allows us to take a given HEI and immediately extract the various possibilities for writing it out in various polychromatic combinations.  Note that correct $i\#$ decomposition is necessary but not sufficient condition for being able to rewrite a given HEI, since it suppresses the finer structural details.  

In the above two simple examples, we can see the implementation straightforwardly.
$\QNknn{6}{4,3}{24}$ 
can be decomposed as 
$\Ink{4,3,0}= \Ink{2,1,0}+2 \Ink{1,1,0}$, giving the form $-I_{(1:1:2)}-C_{(1:1:1|1)}-C_{(1:1:1|1)}$ (cf.\ the second row block in \cref{tab:HEI_N6}).
Even more simply, $\QNknn{6}{5, 5, 1}{12}$ can be decomposed as 
$\Ink{5,5,1}=\Ink{4,4,1}+\Ink{1,1,0}$, giving the form $-I_{(1:2:2)}-C_{(1:1:1|1)}$.
On the other hand, we can easily find examples where the simplest decomposition doesn't work.  Below we examine one such case in full detail.

\paragraph{Cyclic inequality and purification:}
Consider $\QNknn{5}{3, 2}{5}$ in the first row block of \cref{tab:HEI_N5}.  In this form, 
$$ \QNknn{5}{3, 2}{5}=  \xI{ABCD} + \xI{ABCE} - \xI{ABC} - \xI{ABE} - \xI{ACD} $$
has the suggestive decomposition as $\Ink{3,2,0}=\Ink{2,1,0}+\Ink{1,1,0}$, namely as 
$-I_{(1:1:2)}-C_{(1:1:1|1)}$, which can in fact be achieved in two different ways (by alternate groupings), giving
\begin{equation} 
\begin{aligned}
\QNknn{5}{3, 2}{5}
	&=  -\txI{A:B:CE} - \ctxI{A:C:D}{B}  \\
	&= -\txI{A:C:BD} - \ctxI{A:B:E}{C}
\end{aligned}
\end{equation}

If however we use the corresponding S-basis expression,
$$
\QNknn{5}{3, 2}{5}= - \xS{ABCD} - \xS{ABCE} - \xS{BC} - \xS{BD} - \xS{CE} - \xS{A} + \xS{ABC} + \xS{ABD} + \xS{ACE} + \xS{BCD} + \xS{BCE}
$$
and purify on $\x{A}$, under additionally replacing $\x{\{E,O,D,B,C\}} \to \x{\{A,B,C,D,E\}}$, we obtain the cyclic (dihedral) HEI in the canonical form,
$$
\QNknn{5}{5,5,1}{5}= - \xS{ABCDE} - \xS{AB}- \xS{BC} - \xS{CD} - \xS{DE} - \xS{EA} + \xS{ABC} + \xS{BCD}+ \xS{CDE} + \xS{DEA}+ \xS{EAB}
$$
which now has a longer form in the I-basis,
$$
\QNknn{5}{5,5,1}{5} =
- \xI{ABCDE}  + \xI{ABCD} + \xI{ABCE} + \xI{ABDE} + \xI{ACDE} + \xI{BCDE} -\xI{ABD} - \xI{ACD} - \xI{ACE} - \xI{BCE} - \xI{BDE}
$$
This demonstrates the important fact that the $i\#$ classification is not purification-invariant (though it is of course invariant under the smaller color permutation group; cf.\ \cref{fn:inumberclass}).

Let us now try to obtain a compact form for this expression.
One natural decomposition that suggests itself is 
$\Ink{5,5,1}=\Ink{4,4,1}+\Ink{1,1,0}$, giving the form $-I_{(1:2:2)}-C_{(1:1:1|1)}$ (which is what we utilized for $\QNknn{6}{5, 5, 1}{12}$).  However, the specific form of the present $\QNknn{5}{5,5,1}{5}$ does not allow such a decomposition.  This is because the $-I_{(1:2:2)}$ term requires all 4 $\sI_3$'s to have a common argument, whereas in $\QNknn{5}{5,5,1}{5}$ only groups of 3 $\sI_3$'s have a common argument.  To get around this, we could add and subtract a spurious $\sI_3$ so as to complete the requisite form.  This would give
$\QNknn{5}{5,5,1}{5} = -\txI{BC:DE:A} - \txI{CD:B:E} + \txI{A:B:E}$, which is still compact, but has the wrong sign, and therefore is not of the \ctform: Its $i\#$ decomposition is $\Ink{5,5,1}=\Ink{4,4,1}+\Ink{2,1,0}-\Ink{1,0,0}$.  
Similarly, if we try the potential decomposition
$\Ink{5,5,1}=\Ink{3,3,1}+\ldots$, we find that this likewise does not have the structurally necessary form, which requires all 3 of the $\sI_3$'s to have two arguments in common.

A more complicated possibility is to use the decomposition
$\Ink{5,5,1}=\Ink{1,2,1}+\Ink{2,1,0}+2\Ink{1,1,0}$, which now does admit the corresponding \ctform; in particular, 
\begin{equation} 
\QNknn{5}{5,5,1}{5} = -\ctxI{A:B:D}{CE} - \txI{AB:C:E} -\ctxI{A:C:D}{E} -\ctxI{B:D:E}{C}
\end{equation}
 This now has correct signs, but at the expense of a longer expression.  

 In fact, noting that   $\Ink{5,5,1}=\Ink{1,1,0}+\Ink{2,3,1}+\Ink{2,1,0}$, we might seek a combination of the form $-C_{(1:1:1|1)}-C_{(1:1:2|1)}-I_{(1:1:2)}$, and we leave it to the reader as an exercise to check that indeed a suitable regrouping gives the requisite form,\footnote{ \, 
 In fact, originally this was obtained by purifying the tripartite form of $\QNknn{5}{3, 2}{5}$ and using the relevant identities.  
 }
\begin{equation} 
\QNknn{5}{5,5,1}{5} = -\ctxI{A:C:D}{B} -\ctxI{AB:C:E}{D} - \txI{AE:B:D} 
\end{equation}

 \paragraph{Transmutation of $I_3$'s under purifications:}
Motivated by the previous example, it is clear that to extract the full power of groupings, we need to understand how the  $I_3$'s transform under purifications, which will allow us to rewrite and recombine the compact expressions in various ways.
While we leave the full exploration for future work,
two useful observations pertaining to the tripartite information, which follow as a direct consequence of \cref{lem:Ifullsystem}, are the following.  
For any (possibly composite) subsystems $\x{X},\x{Y},\ldots$, the tripartite information for any tripartition of the full system vanishes,
\begin{equation}\label{eq:I3pur0}
	I_3(\x{X:Y:}\cplmt{\x{XY}}) = 0 \ ,
\end{equation}
while for any quadripartition, we can swap any of the arguments for the missing one,
\begin{equation}\label{eq:I3swap}
	I_3(\x{X:Y:Z}) = I_3(\x{X:Y:}\cplmt{\x{XYZ}}) \ .
\end{equation}

Furthermore, we can use this to build up the analogous relation for conditional tripartite information:
\begin{equation}\label{eq:CI3swap}
	I_3(\x{X:Y:Z} \, | \, \x{W}) = - I_3(\x{X:Y:Z}  \, | \,  \cplmt{\x{XYZW}})
\end{equation}
which follows by expanding out the LHS in terms of polychromatic $I_3$'s, e.g.
\begin{equation}\label{eq:}
	I_3(\x{X:Y:Z} \, | \, \x{W}) = I_3(\x{X:Y:ZW}) - I_3(\x{X:Y:W}) 
\end{equation}
applying \cref{eq:I3swap} on each term, and finally recombining to get the RHS of \cref{eq:CI3swap}.  
Notice  however that we have now broken the automatic $I_n$ sign alternation.  In particular, when the purifier appears explicitly in one of the arguments of a conditional tripartite information, its I-basis expansion will not have each $\sI_n$ term with the sign given by $(-1)^n$.
Hence if the purifier were to appear in some conditional $I_3$ in the expression for a HEI, it must be the case by \cref{prop:relcoefbases} that 
any `wrong-sign' $\sI_n$ terms in the I-basis are cancelled out by other terms in the HEI.

For a more involved example of seeing this in operation, consider e.g.\ 
$\QNknn{6}{12, 14, 4}{1716} $, 
\begin{equation}
-\tI{A:BD:CF}-\tI{AE:BO:C}-\ctI{A:B:DE}{C}-\ctI{BF:D:O}{A}
\end{equation}
This involves the purifier $\x{O}$, so it's not in the \ctform\ yet.  If we convert this expression directly to the I-basis and then collect terms, we find that we can indeed recast this into \ctform, but a 6-term (IICCCC) one:
\begin{equation}
-\tI{A:B:F}-\tI{AE:C:DF}-\ctI{B:C:D}{A}-\ctI{A:BD:C}{F}-\ctI{AD:B:E}{C}-\ctI{CE:D:F}{B}
\end{equation}
However, if we first permute $\x{O}$ with $\x{F}$, we can directly regroup terms, and obtain a shorter \ctform\ IICCC,
\begin{equation}
-\tI{CE:D:F}-\tI{AE:BF:C}-\ctI{A:B:DE}{C}-\ctI{A:D:E}{F}-\ctI{A:B:EF}{D}
\end{equation}
This illustrates that, even if we have a given HEI in the \ctform, there might in principle be a purification which can yield an even more compact \ctform\ expression.
Our search strategy guarantees that the \ctform\ we have found for all HEIs involves the minimal number of terms possible.

\section{Systematic search for novel HEIs}
\label{asec:n6heis}

The strategy for generating novel HEIs systematically, by exploiting the structural rigidity of our \ctform, is as follows. We first pick a \ctform\ structure by fixing the number of $I_3$'s and conditional $I_3$'s to be combined. For instance, we may consider expressions of the form
\begin{equation}
\label{eq:egiic}
    - I_3(\x{X}_1:\x{Y}_1:\x{Z}_1) - I_3(\x{X}_2:\x{Y}_2:\x{Z}_2) - I_3(\x{X}_3:\x{Y}_3:\x{Z}_3 \, | \, \x{W}_3).
\end{equation}
To be exhaustive on such \ctform\ structures, we consider all possible disjoint subsystems in $[\nN]$ for each of the terms. In the case of $I_3$, these can be accounted for by the collection of all partitions of $[\nN+1]$ into four nonempty subsets, where the three of them not involving the purifier are chosen as arguments for $I_3$. Hence all choices of $I_3$ can be enumerated by Stirling numbers of the second kind $\left\{ {\genfrac{}{}{0pt}{}{\nN+1}{4}} \right\}$ (e.g.\ $350$ for $\nN=6$), where\footnote{ \, Partitions involving empty subsets can be ignored as they would give vanishing $I_3$ by \cref{lem:Ifullsystem}.}
\begin{equation}
    \left\{ {\genfrac{}{}{0pt}{}{n}{k}} \right\} = \sum_{i=0}^k \frac{(-1)^{k-i}\, i^n}{(k-i)!\, i!}
\end{equation}
The logic for conditional $I_3$ is similar: in this case we consider the collection of all partitions of $[\nN+1]$ into five nonempty subsets, where the four of them not involving the purifier are chosen as arguments. In this case the fourth argument of conditional $I_3$ is not symmetric with respect to the others, so every partition into five subsets ends up giving four distinct expressions. The resulting total number of choices is thus given by $ 4\times\left\{ {\genfrac{}{}{0pt}{}{\nN+1}{5}} \right\}$ (e.g.\ $560$ for $\nN=6$).

We have exhaustively searched for HEIs captured by all expressions involving up to four terms in \ctform.\footnote{ \, In fact, we have performed an even more comprehensive exercise. To better assess the effectiveness of the \ctform, we have executed this exhaustive search with up to four terms relaxing \cref{def:ctform} to allow for subsets containing the purifier. This generically results in expressions which violate the sign-alternation required by \cref{prop:relcoefbases}, but may plausibly yield valid HEIs when wrong-signed terms end up canceling out. In such cases, it would be logically plausible for the resulting HEIs to not admit a recasting in actual \ctform\ where the purifier is disallowed. Our results show that, despite the largely enhanced combinatorics that the inclusion of the purifier causes, allowing for the purifier to appear in \ctform\ expressions of up to four terms only results in an additional $18$ HEIs (out of over $300$), all of which do in fact admit actual \ctform\ expressions without the purifier by merely allowing for more than four terms. We thus conclude that the \ctform\ in its restricted version of \cref{def:ctform} motivated by the sign-alternation property is robust and effective, and has the potential of being able to capture all HEIs.}
To be more explicit, we will use the shorthand introduced in \cref{ss:HEIrecasting}, of referring to $I_3$ and conditional $I_3$ schematically by $\x{I}$ and $\x{C}$, respectively. Then an example of \ctform\ we have exhausted is $\x{IIC}$, which consists of all expressions like \cref{eq:egiic}. The process of exhausting a given structure entails, for all resulting inequalities, proving them valid (via the contraction map proof method) or invalid (via explicit violation by a holographic entropy vector) and, for all valid ones, proving them facets (via a rank test against holographic entropy vectors) or redundant (via a redundancy test against obtained facets).
We have carried out such a comprehensive analysis, but since the whole process is highly nontrivial in practice, here we will just summarize the basic strategy.

The process is computationally heavy from the onset: a brute-force generation of all \ctform\ expressions e.g.\ of the form $\x{IICC}$ would yield thousands of millions of candidate inequalities, which is too expensive memory-wise. Filtering by permutation and purification symmetries at multiple stages in the combinatorial combination of terms in all possible ways is necessary already at the level of generating the data. This reduces the size of the starting dataset by a factor of $(\nN+1)!$, the order of $Sym_{\nN+1}$. Once the data is available, many candidate inequalities can be easily rejected by testing them against simple $\nN=6$ holographic entropy vectors. For example, for $\x{IICC}$ this reduces the set of candidates inequalities to just about a thousand orbits. 
One can then attempt to prove those which are not violated using the contraction map proof method and, when successful, perform a facetness check on proven inequalities using holographic entropy vectors for $\nN=6$. Those valid inequalities which are proven to be facets can be definitively declared HEIs. However, the latter two steps are not necessarily conclusive if negative, and thus a variety of additional cross-checks is necessary to be genuinely exhaustive. For checking validity, the contraction map proof method is highly computationally inefficient, which means it will often time out after a few days with an inconclusive output. In such cases, the corresponding inequalities cannot be declared valid nor invalid. Similarly, the facetness check is also inconclusive if false, as it relies on a data set of holographic entropy vectors for $\nN=6$ obtained semi-randomly, rather than of the complete set of extreme rays (which is unknown).\footnote{ \, If the validity test is positive but the facetness test is inconclusive, then we know the inequality is valid but it remains unknown whether it is a facet or redundant with respect to actual facets. If the validity test is inconclusive but the facetness test is positive, then we know that if valid the inequality would define a facet, but it remains unknown whether it actually is valid. If both the validity and facetness tests are inconclusive, then we know nothing.}
To bypass these difficulties, every new HEI found can be used to improve on the inconclusive results above. Inequalities left inconclusive by the contraction map proof and/or by the facetness check can be tested for redundancy against all found HEIs, and if redundant, conclusively declared valid inequalities which are redundant. If the redundancy test is negative though, the status of the pertinent inequality would remain as inconclusive as before.
At the end of the day, we declare a particular \ctform\ exhausted if we can assign a conclusive status to every inequality of the given form, namely, if the following is true: all inequalities have been proven invalid or valid, and all the latter have been proven either facets or redundant.

Up to small set of valid inequalities whose facet status we have not been able to establish, we have successfully exhausted all expressions with the structures $\x{I}$, $\x{IC}$, $\x{IIC}$, $\x{ICC}$, $\x{IIIC}$, $\x{IICC}$, and $\x{ICCC}$.\footnote{ \, Structures involving more than one $\x{I}$ but no $\x{C}$'s are always redundant, as $\x{I}$ itself is a facet, while structures involving only $\x{C}$'s turn out to always give invalid inequalities.} As listed, we are ordering structures by simplicity: first by number of terms, and then by number of $\x{C}$'s (which are structurally more complicated than the $\x{I}$'s). Oftentimes a given HEI can be captured by more than one structure. We will say a given structure generates a certain HEI if it is the simplest structure capturing that HEI. With this in mind, we can now state our quantitative findings in the first 7 columns of \cref{tab:ICformnums}. 



\begin{table}
    \centering
    \footnotesize
    \begin{tabular}{|c||c|c|c|c|c|c|c|c|c|c|c||c|}
        \hline 
        Form  & I & IC & IIC & ICC & IIIC & IICC & ICCC & IIICC & IICCC & I${^2}$C${^4}$ & I${^2}$C${^5}$ & total \\
        \hline \hline
        facets & 3 & 3 & 13 & 3 & 92 & 197 & 1 & 969 & 447  & 144 & 4 & 1876 \\
        \hline 
        lifts & 3 & 2 & 2 & 1 & 0 & 0 & 1 & 0 & 0 &  1 & 0 &  10  \\         \hline
    \end{tabular}{}
    \caption{The number of HEIs with simplest \ctform\ of the type specified.  (This excludes the single SA, which is not superbalanced and therefore not in the \ctform.)
    Up to 4 terms in the \ctform, the search is mostly exhaustive; only for IIIC and IICC forms there are an additional 7 and 5 valid inequalities, respectively, whose facet status remains inconclusive. 
    The last row indicates how many of the enumerated facets are just lifts from $\nN=5$ HEIs.
    }
    \label{tab:ICformnums}
\end{table}

Besides the exhaustive analysis described above, we have also pursued another systematic approach that has turned out to be extremely fruitful, yielding more than 1800 new HEIs at the time of writing, summarized in the remaining columns of \cref{tab:ICformnums}. It relies on two simple observations. The first one is that, if some inequality is valid, then the addition of any instance of $I_3$ will always make it redundant. This statement is straightforward to prove. The second one is that, if some inequality is not valid, then the addition of any instance of conditional $I_3$ never seems to make it valid. This is a statement we are not proving, but for which we have encountered no counterexamples. These two observations can be combined into the following systematic strategy. Starting from any set of inequalities in \ctform, separate them into valid and not valid. Then we know that adding instances of $I_3$ to the valid ones will never generate a facet, and that adding instances of conditional $I_3$ to the not valid ones will not generate valid inequalities. Hence we restrict the generation of data to only adding instances of $I_3$ to inequalities that are not valid, and instances of conditional $I_3$ to inequalities which already are valid. The former has the potential to turn inequalities which are not valid into inequalities which are valid, while the latter has the potential to turn inequalities which are valid (but possibly redundant) into inequalities which are valid and facets. In other words, the general heuristic is that $I_3$ and conditional $I_3$ respectively point in an outward and inward direction in entropy space, and therefore the former favors validity, whereas the latter favors tightness.

Although we have by no means pursued this search to its fullest extend, we believe the finding of more than 1800 new HEIs through this strategy definitely motivates further exploration. Given that the search has not been comprehensive on this front, we simply report on a subset of them.
\footnote{ \, These also appear in \cite{hecdata}, whose intent is to have a more up-to-date repository of the HEC data.}
Our ancillary files contain the following data:

\begin{itemize}
    \item \texttt{HEIvectors.txt}: List of representative HEIs for each of the $1877$ distinct facet orbits currently known for the $\nN=6$ HEC. These are given one per row by their coefficients in the entropy basis following the canonical order for subsystems (first by cardinality, then lexicographically).
    The first $11$ rows correspond to the uplifts of $\nN\le5$ HEIs to $\nN=6$. 
    The rest are all novel $\nN=6$ HEIs found exhaustively using \ctform\ expressions involving up to four terms and also via the systematic search explained above. 
    \item \texttt{HEIforms.txt}: List of \ctform\ expressions for all HEIs given in \texttt{HEIvectors.txt}, following the same order.\footnote{\, The first HEI is SA, which obviously has no \ctform.} These are given one per row by 
    a list of either $3$ or $4$ terms with the arguments of $\x{I}$ and $\x{C}$ terms, respectively. In the case of $\x{C}$ terms, the forth argument corresponds to the subsystem conditioned on. Within every cluster of rows specified in the ordering of \texttt{HEIvectors.txt} (i.e., rows $1-11$ for lifts, and the rest), notice that HEIs have been ordered by increasing complexity of their \ctform\ expressions.
\end{itemize}

To give the reader some feel for the richness of the HEI forms which can appear, we characterize them by two salient quantities.  The first is the $i\#$ classification utilized above (cf.\ \cref{tab:kclassif} for the $i\#$ classification for the various I and C terms), presented as the quadruplet $\{\# \sI_3,\# \sI_4,\# \sI_5,\# \sI_6\}$ counting the corresponding number of singleton $\sI_n$'s in the I-basis.  The second, which we will label by $m$, counts the total number of arguments when written out in the \ctform\ (so for instance a singleton $\sI_3$ would have $m=3$), which provides a distinct measure of complexity compared to the $i\#$ classification.  
To consolidate the above data, we tabulate the frequency (i.e.\ number of distinct HEI orbits) of a given $i\#$ classification and $m$-number in a 2-d array
in \cref{fig:HEIimfreq}, with the background color-coded by the corresponding I..C.. form (which is unique for the given $\{i\#,m\}$ pair).  For example, the most frequent occurrence comprises of 132 distinct orbits which all have the form IIICC, $i\#$ classification $\Ink{12,12,3,0}$, and $m=23$.

\begin{figure}[htbp] 
\begin{center}
\includegraphics[width=6in]{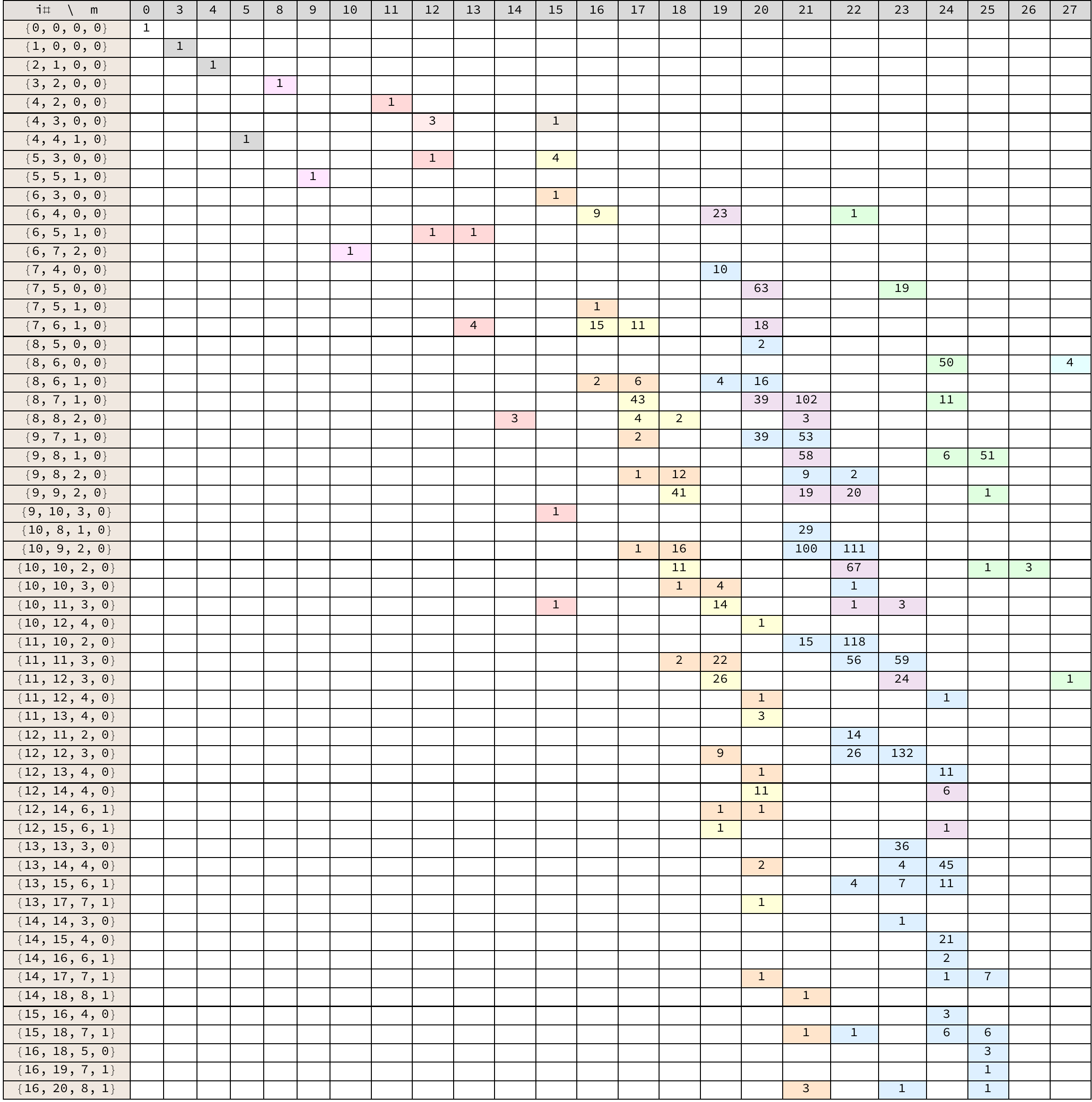}
\caption{The number of HEI orbits hitherto found (at the time of submitting v2) with the given i\# classification (rows) and $m$-number (columns).  Blank spaces correspond to zero frequency, and the shading indicates the I..C.. form: gray=I, magenta=IC, red=IIC, pink=ICC, orange=IIIC, yellow=IICC, brown=ICCC, blue=IIICC, purple=IICCC, green=IICCCC, and cyan=IICCCCC.  The total of all numbers is 1877.
}
\label{fig:HEIimfreq}
\end{center}
\end{figure} 

While up to 4 terms in the \ctform\ the HEI search was mostly exhaustive, for higher number of terms the search is not yet complete.  Nevertheless, we see many structural similarities.  Since in v1 we focused largely on the exhaustive part, for the reader's convenience, we devote \cref{tab:new_HEIs} to a representative set of new HEI orbits found in v2, which all have at least 5 terms.

\begin{table}
   \begin{center}
   \tiny
   \setlength\tabcolsep{5pt}
    \begin{tabular}{|c||c|c|c|c|}
        \hline 
        $n$ & IC form & $i\#$ & $m$ & \ctform \\
            \hline 
        331&IIICC&\{8, 6, 1, 0\}&20&$-\tI{A:B:CDF}-\tI{A:D:E}-\tI{B:C:DE}-\ctI{A:C:F}{E}-\ctI{A:E:F}{B}$\\
         347&IIICC&\{9, 7, 1, 0\}&20&$-\tI{A:B:EF}-\tI{A:BD:CE}-\tI{B:C:D}-\ctI{A:C:F}{D}-\ctI{A:E:F}{B}$\\
         386&IIICC&\{9, 7, 1, 0\}&21&$-\tI{A:B:CD}-\tI{A:BE:C}-\tI{A:D:EF}-\ctI{B:C:F}{D}-\ctI{B:CE:D}{A}$\\
         439&IIICC&\{10, 8, 1, 0\}&21&$-\tI{A:B:CE}-\tI{A:B:DF}-\tI{A:CD:EF}-\ctI{A:C:E}{B}-\ctI{A:D:F}{B}$\\
         477&IIICC&\{10, 9, 2, 0\}&21&$-\tI{A:B:C}-\tI{A:BF:D}-\tI{AC:DF:E}-\ctI{A:B:D}{E}-\ctI{B:DF:E}{A}$\\
         577&IIICC&\{11, 10, 2, 0\}&21&$-\tI{A:B:F}-\tI{A:BE:CD}-\tI{AD:BF:C}-\ctI{A:C:E}{F}-\ctI{A:E:F}{C}$\\
         594&IIICC&\{10, 9, 2, 0\}&22&$-\tI{A:B:CD}-\tI{A:B:DF}-\tI{ABD:C:F}-\ctI{A:BF:E}{C}-\ctI{C:D:E}{F}$\\
         705&IIICC&\{11, 10, 2, 0\}&22&$-\tI{A:B:CD}-\tI{A:BC:DF}-\tI{A:DF:E}-\ctI{BF:C:E}{A}-\ctI{C:D:F}{E}$\\
         838&IIICC&\{11, 11, 3, 0\}&22&$-\tI{A:B:C}-\tI{A:BC:DE}-\tI{AC:B:DF}-\ctI{C:E:F}{AB}-\ctI{D:E:F}{C}$\\
         894&IIICC&\{12, 12, 3, 0\}&22&$-\tI{A:B:C}-\tI{A:BE:CD}-\tI{AB:D:EF}-\ctI{A:B:F}{D}-\ctI{C:DF:E}{A}$\\
         925&IIICC&\{11, 11, 3, 0\}&23&$-\tI{A:B:CDF}-\tI{A:C:EF}-\tI{BE:D:F}-\ctI{A:CF:E}{D}-\ctI{AB:C:D}{F}$\\
         984&IIICC&\{12, 12, 3, 0\}&23&$-\tI{A:B:CD}-\tI{A:B:DF}-\tI{AC:B:DE}-\ctI{A:CF:E}{B}-\ctI{AF:C:D}{B}$\\
        1116&IIICC&\{13, 13, 3, 0\}&23&$-\tI{A:B:CE}-\tI{A:BC:DE}-\tI{AC:B:EF}-\ctI{AF:C:D}{B}-\ctI{C:D:E}{A}$\\
        1177&IIICC&\{13, 14, 4, 0\}&24&$-\tI{A:B:CD}-\tI{A:BC:DE}-\tI{AF:BC:D}-\ctI{AE:B:F}{D}-\ctI{B:C:E}{AF}$\\
         1222&IIICC&\{14, 15, 4, 0\}&24&$-\tI{A:B:CD}-\tI{AB:C:DF}-\tI{AB:CE:F}-\ctI{A:BE:F}{C}-\ctI{AB:D:E}{F}$\\
         1246&IIICC&\{13, 15, 6, 1\}&24&$-\tI{A:B:CD}-\tI{A:BF:DE}-\tI{ABC:D:E}-\ctI{A:BCE:F}{D}-\ctI{A:C:F}{E}$\\
         1284&IICCC&\{6, 4, 0, 0\}&19&$-\tI{A:B:D}-\tI{AC:D:E}-\ctI{A:C:F}{D}-\ctI{B:C:E}{D}-\ctI{C:D:F}{E}$\\
         1307&IICCC&\{7, 5, 0, 0\}&20&$-\tI{A:B:CD}-\tI{AB:E:F}-\ctI{B:C:D}{E}-\ctI{C:D:E}{F}-\ctI{C:D:F}{B}$\\
         1370&IICCC&\{7, 6, 1, 0\}&20&$-\tI{A:B:C}-\tI{ABD:C:F}-\ctI{A:D:E}{C}-\ctI{B:D:E}{C}-\ctI{C:D:E}{F}$\\
         1388&IICCC&\{8, 7, 1, 0\}&20&$-\tI{A:B:F}-\tI{AF:BC:D}-\ctI{A:C:E}{B}-\ctI{B:C:E}{D}-\ctI{B:D:E}{F}$\\
         1428&IICCC&\{8, 7, 1, 0\}&21&$-\tI{A:B:CE}-\tI{A:BC:E}-\ctI{A:CD:F}{E}-\ctI{A:E:F}{B}-\ctI{B:C:D}{A}$\\
         1530&IICCC&\{9, 8, 1, 0\}&21&$-\tI{A:B:CE}-\tI{A:BF:DE}-\ctI{A:C:D}{B}-\ctI{B:C:D}{A}-\ctI{B:C:E}{A}$\\
         1817&IICCCC&\{9, 8, 1, 0\}&25&$-\tI{A:B:CE}-\tI{A:B:CF}-\ctI{A:C:F}{B}-\ctI{A:DF:E}{B}-\ctI{B:C:D}{A}-\ctI{B:C:D}{A}$\\
         1732&IICCCC&\{7, 5, 0, 0\}&23&$-\tI{A:B:EF}-\tI{C:E:F}-\ctI{A:C:D}{B}-\ctI{B:C:D}{F}-\ctI{B:D:E}{C}-\ctI{B:D:F}{E}$\\
         1800&IICCCC&\{8, 7, 1, 0\}&24&$-\tI{A:B:E}-\tI{A:BD:C}-\ctI{A:C:D}{E}-\ctI{B:C:EF}{A}-\ctI{B:D:F}{C}-\ctI{C:D:E}{B}$\\
         \hline
    \end{tabular}{}
    \caption{Representative HEIs with $\ge5$ terms in the \ctform, selected so that their $\{i\#,m\}$ pairs were absent in v1, but in the v2 set there are at least 10 such HEIs with that given $\{i\#,m\}$.}
    \label{tab:new_HEIs}
\end{center}
\end{table}

\newpage
\addcontentsline{toc}{section}{References}
\bibliographystyle{JHEP}
\bibliography{references.bib}

\end{document}